\documentclass[lettersize,journal]{IEEEtran}
\IEEEoverridecommandlockouts

\usepackage{cite}
\usepackage{bm} 
\usepackage{amsmath,amssymb,amsfonts,amsthm}
\renewcommand{\IEEEQED}{\IEEEQEDopen}

\newtheoremstyle{indentedplain}%
{\topsep}
{\topsep}
{\itshape}
{0.5em}
{\bfseries}
{.}
{0.5em}
{}
\theoremstyle{indentedplain}
\newtheorem{theorem}{Theorem}
\newtheorem{proposition}{Proposition}
\newtheorem{lemma}{Lemma}
\newtheorem{corollary}{Corollary}
\newtheorem{assumption}{Assumption}
\newtheoremstyle{indenteddefinition}%
{\topsep}
{\topsep}
{\normalfont}
{0.5em}
{\bfseries}
{.}
{0.5em}
{}
\theoremstyle{indenteddefinition}
\newtheorem{definition}{Definition}
\newtheoremstyle{boldremark}%
{3pt}
{3pt}
{\normalfont}
{0.5em}
{\bfseries}
{.}
{0.5em}
{}
\theoremstyle{boldremark}
\newtheorem{remark}{Remark}
\usepackage{algorithmic}
\usepackage{array}
\usepackage{booktabs}
\usepackage{graphicx}
\usepackage{flushend}
\usepackage{textcomp}
\usepackage{hyperref}					
\hypersetup{colorlinks,
	linkcolor=blue,%
	anchorcolor=blue,
    citecolor=blue}
\usepackage{xcolor}

\newif\ifarxivversion
\arxivversiontrue
\newcommand{\apprefappendix}[2]{\ifarxivversion Appendix~\ref{#1}\else arXiv Appendix~#2\fi}

\def\BibTeX{{\rm B\kern-.05em{\sc i\kern-.025em b}\kern-.08em
    T\kern-.1667em\lower.7ex\hbox{E}\kern-.125emX}}
\begin{document}

\title{{ How Much Sensing Information Is Needed to Control an Unstable Linear System?}
}
\author{
Ming Li,~\IEEEmembership{Student Member,~IEEE},
Fan Liu,~\IEEEmembership{Senior Member,~IEEE},
Yifeng Xiong,~\IEEEmembership{Member,~IEEE},\\
Jie Xu,~\IEEEmembership{Fellow,~IEEE},
Tao Liu,~\IEEEmembership{Member,~IEEE},
Weijie Yuan,~\IEEEmembership{Senior Member,~IEEE},
and Shi Jin,~\IEEEmembership{Fellow,~IEEE}%
\thanks{This work was supported in part by the National Science and Technology Major Projects of China under Grant 2025ZD1302000, and the National Natural Science Foundation of China (NSFC) under Grant 62522107. (\textit{Corresponding author: Fan Liu.})}
\thanks{M. Li, T. Liu and W. Yuan are with the School of Automation and Intelligent
Manufacturing, Southern University of Science and Technology, Shenzhen 518055, China
(email: lim2024@mail.sustech.edu.cn, liut6@sustech.edu.cn, yuanwj@sustech.edu.cn).}%
\thanks{F. Liu and S. Jin are with the National Mobile Communications Research Laboratory, Southeast University, Nanjing 210096, China  (email: fan.liu@seu.edu.cn,
jinshi@seu.edu.cn).}%
\thanks{Y. Xiong is with the School of Information and Communication Engineering, Beijing University of Posts and Telecommunications, Beijing 100876, China
(email: yifengxiong@bupt.edu.cn).}%
\thanks{J. Xu is with the School of Science and Engineering, The Chinese University of Hong Kong, Shenzhen 518172, China
(email: xujie@cuhk.edu.cn).}%
}

\maketitle

\begin{abstract}
Modern control systems increasingly rely on sensing to infer the system state before control actions can be taken. Yet a given observation mechanism may fail to preserve sufficient information about the unstable modes, regardless of the downstream estimator or controller. This paper asks how much sensing information is needed to estimate and control an unstable linear system, whose measurements are generated by a prescribed, possibly nonlinear and non-Gaussian, observation law $p(\mathbf y_t|\mathbf x_t)$.
To address this question, we first quantify sensing information using
directed information, thereby accounting for causal feedback.
We then establish necessary and sufficient information rate conditions for
estimating and controlling this linear system. For necessity, keeping either the estimation error or the closed-loop state
bounded in mean square requires a directed information rate of at least the
open-loop expansion rate $R_{\mathrm{exp}}$.
This lower bound remains valid under additive process noise.
Since the directed information rate from the unstable state is generally difficult to evaluate, we derive computable upper and lower bounds when each observation is generated by a nonlinear function of the state plus additive noise. An upper bound below $R_{\mathrm{exp}}$ certifies infeasibility, whereas a lower bound above $R_{\mathrm{exp}}+R_{\mathrm{NG}}$ certifies sufficiency under the stated posterior covariance regularity.
For linear Gaussian observations, the upper bound is tight and its value is determined by the steady-state Riccati equation.
For sufficiency, we introduce the posterior non-Gaussianity rate
$R_{\mathrm{NG}}$, which {quantifies the additional information rate required to bridge the gap between the posterior entropy and its covariance-matched Gaussian}.
Under uniform posterior covariance regularity, the condition
$\mathcal I(\mathbf z\to\mathbf y)>
R_{\mathrm{exp}}+R_{\mathrm{NG}}$
guarantees that the estimation error converges to zero in mean square.
For a stabilizable plant, certainty-equivalence feedback further guarantees mean-square convergence of the
closed-loop state to zero.
Finally, we identify verifiable curvature conditions on the likelihood and prior under which $R_{\mathrm{NG}}=0$, so that the sufficient information threshold coincides with $R_{\mathrm{exp}}$. Together, these results clarify how unstable dynamics and posterior non-Gaussianity jointly determine how much sensing information control requires.
\end{abstract}

\begin{IEEEkeywords}
Directed information, nonlinear observation mechanisms, observability, stabilizability, sensing-limited control.
\end{IEEEkeywords}

\IEEEraisesectionheading{\section{Introduction}}
\IEEEPARstart{F}{eedback} control begins with sensing. Before an autonomous or robotic system can take effective control actions, it must first infer its state from the available measurements. When these measurements are corrupted by noise, saturation, or nonlinear distortion, however, information about unstable modes may be lost before estimation and control even begin. This raises a basic question: how much information must a sensing mechanism provide to make estimation and stabilization possible? Toward that end, we study an unstable linear system observed through a prescribed conditional law $p(\mathbf y_t|\mathbf x_t)$, and characterize its sensing information requirements using tools from both control theory and information theory.

The classical data-rate theorem provides a natural starting point for this question. It considers an unstable system whose state has already been measured and is transmitted to the controller over a finite-rate communication channel. Its central conclusion is that the information delivered through the channel must keep pace with the open-loop expansion of the unstable modes\cite{wong1999finite2,brockett2000quantized,elia2001limited,
liberzon2003stabilization,tatikonda2004control,
tatikonda2004controlnoisy}. This principle relating information rate to system expansion has since been extended to stochastic disturbances and
fixed-rate feedback \cite{Girish2004rates,yuksel2010fixedrate}, noisy
channels \cite{tatikonda2004stochastic}, and signal-to-noise-ratio
constraints \cite{braslavsky2007feedback}.

The same relationship between information rate and system expansion guides our analysis, but the information constraint arises at a fundamentally different stage. In the classical setting, the state is assumed to have been measured perfectly, and information is lost only when that measurement is encoded and transmitted over a communication link. Here, the bottleneck appears before estimation: the observation law determines how much information about the state is contained in the measurements available to the estimator and controller, as illustrated in Fig.~\ref{fig:comparison}. We refer to control under this information constraint at the observation side as \emph{sensing-limited control}.

Existing studies on control under sensing constraints have largely treated the sensing mechanism as part of the system design. Typical examples include sensor selection and switching among sensing modes within LQG formulations \cite{tzoumas2021lqg,vanhorssen2019switched}. Related information-theoretic work minimizes directed information subject to an LQG performance constraint and shows that the optimum can be realized by a linear Gaussian sensor, filter, and controller \cite{tanaka2018minimumDI}. Our problem is complementary: rather than optimizing the sensing architecture, we evaluate whether the observations generated by a given mechanism convey enough causal information about the unstable modes for estimation and stabilization. To the best of our knowledge, the required information rate has not been characterized for this setting. This motivates a sensing counterpart to the data-rate theorem.

To extend this principle to the observation side, we first need an information measure for a given observation law. Unlike a finite-rate communication channel, an observation law does not come with an explicit rate parameter. Although mutual information measures statistical dependence between the state and the observations, it is symmetric and therefore does not distinguish causal information flow from dependence induced by feedback. Directed information from the unstable state process to the observation process provides a natural causal measure \cite{Massey1990CAUSALITYFA,Charalambous2013}. At each time, its increment measures the new information carried by the current observation about the unstable state history, conditioned on all past observations. Its rate therefore quantifies the causal information supplied by the observation process.

Under the above formulation, the central question may thus be divided into two parts. First, does the open-loop expansion rate remain the fundamental information threshold when the measurements are generated by a given nonlinear observation law? Answering this question would reveal whether the threshold is imposed solely by the unstable dynamics or also depends on the particular form of the observation mechanism. Second, if the directed information rate exceeds this threshold, is this condition alone sufficient to drive the estimation error and closed-loop state to zero in mean square? This question is less immediate because directed information characterizes posterior uncertainty through entropy, whereas mean square performance depends on posterior covariance. Although these quantities are directly related for Gaussian posteriors, they may behave differently when the posterior is non-Gaussian.

\begin{figure}[!t]
    \centering
    \includegraphics[width=\linewidth]{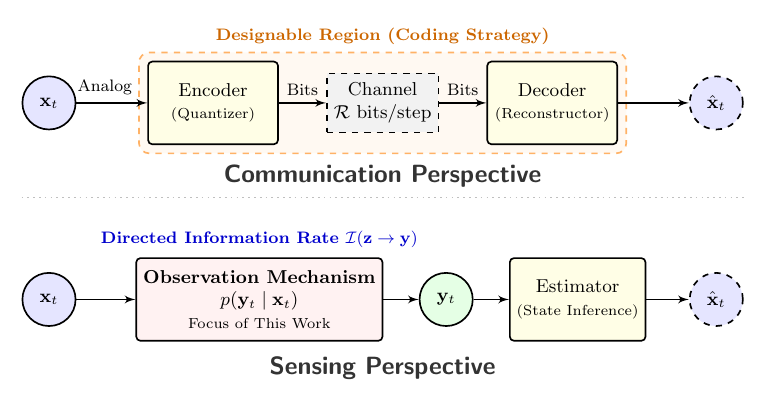}
    \caption{Control under communication constraints uses a designable interface between the encoder and channel, whereas sensing-limited control evaluates a given observation mechanism.}
    \label{fig:comparison}
\end{figure}

Building on our preliminary conference paper~\cite{li2026sensing}, the present work develops a more comprehensive information-theoretic framework for sensing-limited control and substantially extends both the converse and sufficiency analyses. The main contributions are summarized as follows:

\begin{itemize}
\item We derive a fundamental lower bound on the sensing information required for mean-square observability and stabilizability under a given observation law that may be nonlinear and non-Gaussian. Specifically, the directed information rate from the unstable state to the observations must satisfy
\begin{equation*}
\mathcal{I}(\mathbf z\to\mathbf y)
\geq
R_{\mathrm{exp}}.
\end{equation*}
The proof is based on a conditional entropy balance. We further show that the same lower bound remains necessary in the presence of additive process disturbances.
\item We derive a computable impossibility criterion for nonlinear observation models with additive noise. By deriving an upper bound on the directed information rate from the full state in terms of the conditional output covariance, we obtain a criterion that certifies infeasibility whenever this upper bound falls below $R_{\mathrm{exp}}$. For the linear Gaussian model, the upper bound is attained, and the exact directed information rate from the full state is characterized by the steady-state solution of the discrete algebraic Riccati equation.
\item We establish a general sufficient information threshold that accounts for posterior non-Gaussianity through the rate $R_{\mathrm{NG}}$. Under uniform posterior covariance regularity, the condition
\begin{equation*}
\mathcal{I}(\mathbf z\to\mathbf y)
>R_{\mathrm{exp}}+R_{\mathrm{NG}}
\end{equation*}
guarantees mean-square attractivity of the estimation error and, under certainty-equivalence feedback, of the closed-loop state. The proof bridges the gap between entropy and covariance by converting posterior entropy contraction into covariance convergence.  With an additional local mean-square overshoot bound, these attractivity results further imply asymptotic mean-square observability and stabilizability.
\item When each observation is generated by a nonlinear function of the state plus additive noise, we further derive an
lower bound based on entropy power that provides a computable way to verify the
sufficient rate condition without evaluating the directed information
rate exactly. We also identify important model classes for which
$R_{\mathrm{NG}}=0$, including linear Gaussian observations and their invertible readout variants, and provide a verifiable curvature condition
based on eventual posterior log-concavity.
\end{itemize}

The remainder of this paper is organized as follows.
Section~\ref{sec:system-model} presents the system model, the stable and unstable
decomposition, and the information and performance measures.
Section~\ref{sec:necessary} establishes the necessary conditions, their
extension to process noise, the computable impossibility criterion, and its
exact linear Gaussian specialization.
Section~\ref{sec:sufficient} develops the general sufficiency result, a
lower bound based on entropy power for verifying its rate condition, and the
regimes where $R_{\mathrm{NG}}=0$.
Section~\ref{sec:conclusion} concludes the paper.

\textit{Notations:}
Matrices are denoted by bold uppercase letters, vectors by bold lowercase
letters, and scalars by normal italic letters. Random quantities and their
realizations use the same typography; their roles are distinguished by
context, probability laws, and expectation operators. The superscript
$(\cdot)^T$ denotes transpose; $\mathbf I_n$ is the $n\times n$ identity
matrix; and $\mathrm{Tr}(\cdot)$ and $\det(\cdot)$ denote trace
and determinant. The relation $\mathbf A\succeq\mathbf B$ means
that $\mathbf A-\mathbf B$ is positive semidefinite.
The norms $\|\mathbf x\|$ and $\|\mathbf A\|$ are Euclidean and
spectral, respectively. For a symmetric matrix $\mathbf A$,
$\lambda_{\max}(\mathbf A)$ and $\lambda_{\min}(\mathbf A)$ are
its largest and smallest eigenvalues; $\sigma_{\max}(\mathbf A)$ and
$\sigma_{\min}(\mathbf A)$ are its largest and smallest singular
values.
Let $\mathbb N_0:=\{0,1,2,\ldots\}$ and, for integers $j\le k$, let
$[j\!:\!k]:=\{j,j+1,\ldots,k\}$; $\lfloor a\rfloor$ is the greatest
integer not exceeding $a$. For a sequence $\{\mathbf a_t\}$ indexed by time,
define its history by
$\mathbf a^t:=(\mathbf a_0,\ldots,\mathbf a_t)$.
A history ending at $-1$ is empty; in particular,
$\mathbf y^{-1}:=\varnothing$.
The operators $\mathbb P(\cdot)$, $\mathbb E[\cdot]$,
$\mathrm{Var}(\cdot)$, and $\mathrm{Cov}(\cdot)$ denote probability,
expectation, variance, and covariance. The distribution
$\mathcal N(\boldsymbol\mu,\boldsymbol\Sigma)$ is Gaussian with mean
$\boldsymbol\mu$ and covariance $\boldsymbol\Sigma$. Conditional
quantities without a specified realization, such as
$h(\mathbf x|\mathbf y)$, are averaged over the random conditioning
variable. For a fixed realization $\mathbf y$, the conditional density is
written as $p_{\mathbf x|\mathbf y}(\mathbf x|\mathbf y)$; the same symbols
are used when the arguments are random, with the interpretation clear from
context.
For a random vector $\mathbf x$ with density $p_{\mathbf x}$, define
$h(\mathbf x)
:=
h(p_{\mathbf x})
:=
-\int p_{\mathbf x}(\boldsymbol\xi)
\log_2 p_{\mathbf x}(\boldsymbol\xi)
\,\mathrm d\boldsymbol\xi$.
The conditional differential entropy is
$h(\mathbf x|\mathbf y):=\mathbb E_{\mathbf y}
[h(p_{\mathbf x|\mathbf y}(\cdot|\mathbf y))]$; hence, all
differential entropies are measured in bits. For probability densities
$p$ and $q$,
$D_{\mathrm{KL}}(p\|q):=\int p(\boldsymbol\xi)
\log_2(p(\boldsymbol\xi)/q(\boldsymbol\xi))\,\mathrm d\boldsymbol\xi$
is the Kullback--Leibler divergence in bits.

\section{System Model and Preliminaries} \label{sec:system-model}

\subsection{System Dynamics and Observations}
Consider a discrete-time linear dynamical system. The overall system architecture is depicted in Fig.~\ref{fig:system_model}. The system dynamics are governed by the following noiseless state equation
\begin{equation}
    \mathbf x_{t+1} = \mathbf A \mathbf x_t + \mathbf B \mathbf u_t, \quad t \in \mathbb{N}_0,
    \label{eq:system_dynamics}
\end{equation}
where $\mathbf x_t \in \mathbb{R}^n$ is the state vector, $\mathbf u_t \in \mathbb{R}^m$ is the control input vector, $\mathbf A \in \mathbb{R}^{n \times n}$ is the system matrix, and $\mathbf B \in \mathbb{R}^{n \times m}$ is the control matrix. We assume that the system matrix $\mathbf A$ is unstable (i.e., it has at least one eigenvalue with magnitude at least 1) and that the initial state $\mathbf x_0$ admits a valid probability density function (PDF) with a finite second moment and a finite differential entropy, i.e., $-\infty < h(\mathbf x_0) < \infty$.

The system is observed through a memoryless vector channel subject to noise and nonlinear distortions, yielding the observation $\mathbf y_t \in \mathbb{R}^p$. To capture these stochastic and nonlinear characteristics, the observation process is modeled by a general conditional PDF $p(\mathbf y_t|\mathbf x_t)$. The estimator and controller are designed downstream of this law and operate on the generated observation history $\mathbf y^t=(\mathbf y_0,\ldots,\mathbf y_t)$.


\begin{figure}[t] 
    \centering
    \includegraphics[width=0.9\linewidth]{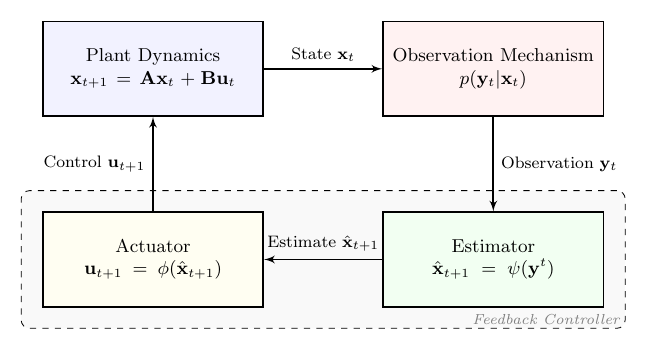}

    \caption{The closed-loop control architecture under a given observation mechanism. The feedback loop comprises the plant dynamics, the observation mechanism, an estimator, and an actuator.}
    \label{fig:system_model}
\end{figure}

\subsection{Canonical Decomposition and Expansion Rate} \label{subsec:canonical_decomposition}
To isolate the components critical to the information analysis, we decouple the system into stable and unstable modes.
Using a real Jordan decomposition, there exists an invertible transformation
matrix $\mathbf T\in\mathbb{R}^{n\times n}$ such that
\begin{equation*}
    \mathbf T\mathbf x_t
    =
    \begin{bmatrix}
        \mathbf z_t\\
        \mathbf s_t
    \end{bmatrix},
\end{equation*}
where $\mathbf z_t\in\mathbb R^{n_u}$ and
$\mathbf s_t\in\mathbb R^{n_s}$ denote the unstable and stable states,
respectively, with $n_u+n_s=n$.

Accordingly, the system matrices are transformed into the following block forms
\begin{equation*}
    \mathbf T\mathbf A\mathbf T^{-1}
    = \begin{bmatrix} \mathbf A_u & \mathbf 0 \\ \mathbf 0 & \mathbf A_s \end{bmatrix},
    \qquad
    \mathbf T\mathbf B
    = \begin{bmatrix} \mathbf B_u \\ \mathbf B_s \end{bmatrix}.
\end{equation*}
Here, $\mathbf A_u \in \mathbb{R}^{n_u \times n_u}$ contains all eigenvalues of $\mathbf A$ with magnitude $|\lambda| \ge 1$, while $\mathbf A_s \in \mathbb{R}^{n_s \times n_s}$ contains all eigenvalues with $|\lambda| < 1$. Similarly, $\mathbf B_u \in \mathbb{R}^{n_u \times m}$ and $\mathbf B_s\in \mathbb{R}^{n_s \times m}$ are the components of the control matrix corresponding to the unstable and stable subspaces.

The two subsystems therefore evolve as
\begin{equation*}
    \mathbf z_{t+1}=\mathbf A_u\mathbf z_t+\mathbf B_u\mathbf u_t,
    \qquad
    \mathbf s_{t+1}=\mathbf A_s\mathbf s_t+\mathbf B_s\mathbf u_t.
\end{equation*}

For each $t$, define the posterior and prior means and error covariances of
the unstable state as
\begin{equation*}
    \hat{\mathbf z}_t
    :=
    \mathbb E[\mathbf z_t|\mathbf y^{t}],
    \qquad
    \hat{\mathbf z}_t^-
    :=
    \mathbb E[\mathbf z_t|\mathbf y^{t-1}],
\end{equation*}
\begin{equation*}
    \mathbf P_t
    :=
    \mathrm{Cov}(\mathbf z_t|\mathbf y^{t}),
    \qquad
    \mathbf P_t^-
    :=
    \mathrm{Cov}(\mathbf z_t|\mathbf y^{t-1}).
\end{equation*}
These quantities are random through their dependence on the corresponding
observation histories.

Throughout, the estimation and control strategies are causal. Under the
predictive convention used in the converse arguments, the control input
$\mathbf u_t$ is generated through the composition of an estimator $\psi$
and a controller $\phi$
\begin{align*}
    \psi &: \mathcal Y^{t-1} \to \mathbb{R}^{n_x}, \quad \hat{\mathbf x}_t = \psi(\mathbf y^{t-1}), \\
    \phi &: \mathbb{R}^{n_x} \to \mathbb{R}^{n_u}, \quad \mathbf u_t = \phi(\hat{\mathbf x}_t).
\end{align*}
Here, $\mathcal Y_k$ denotes the observation space at time $k$, and
$\mathcal Y^{t-1}:=\prod_{k=0}^{t-1}\mathcal Y_k$ is the space of all
possible observation histories up to time $t-1$. Thus,
$\hat{\mathbf x}_t$ depends only on observations available before time
$t$. The function $\psi$ maps this history to a state estimate, and $\phi$
maps the estimate to the control input.


The information-theoretic limit discussed in the following sections is fundamentally governed by the expansion rate $R_{\mathrm{exp}}$, defined by
\begin{equation*}
    R_{\mathrm{exp}} := \sum_{i: |\lambda_i(\mathbf A)| \ge 1} \log_2 |\lambda_i(\mathbf A)| = \log_2 |\det(\mathbf A_u)|.
\end{equation*}
Physically, $R_{\mathrm{exp}}$ quantifies the intrinsic entropy generation rate of the open-loop system. It measures how fast the unstable state volume expands in the absence of control.

\subsection{Directed Information Rate}
To quantify the causal information flow from the unstable state sequence to the observation sequence in our feedback loop, we use directed information~\cite{Massey1990CAUSALITYFA,Charalambous2013}. The cumulative directed information up to time $t$ is defined by
\begin{equation} \label{eq:directed_info_def}
    I(\mathbf x^{t} \to \mathbf y^{t}) := \sum_{k=0}^t I(\mathbf x_k; \mathbf y_k | \mathbf y^{k-1}),
\end{equation}
where $I(\mathbf x_k; \mathbf y_k | \mathbf y^{k-1})$ denotes the conditional mutual information between the state and the observation at time $k$, given the past observation history.

The corresponding \textit{directed information rate} is
\begin{equation*}
    \mathcal{I}(\mathbf x \to \mathbf y) := \liminf_{t \to \infty} \frac{1}{t+1} I(\mathbf x^{t} \to \mathbf y^{t}),
\end{equation*}
which measures the average information conveyed over time by the observations about the state trajectory, in bits/step.
This rate serves as the fundamental measure of the sensing capability for state estimation and stabilization.

\begin{remark}
The general definition of directed information involves the mutual information between the state history $\mathbf x^{t}$ and the observation $\mathbf y_t$, i.e., $I(\mathbf x^{t}; \mathbf y_t | \mathbf y^{t-1})$.
Using the definition of conditional mutual information, this can be expanded as
\begin{equation*}
    I(\mathbf x^{t}; \mathbf y_t | \mathbf y^{t-1}) = h(\mathbf y_t | \mathbf y^{t-1}) - h(\mathbf y_t | \mathbf x^{t}, \mathbf y^{t-1}).
\end{equation*}
By the memoryless observation mechanism, the current observation is generated
according to the conditional law $p(\mathbf y_t|\mathbf x_t)$. Hence,
conditional on $\mathbf x_t$, the past states and past observations provide no
additional information about $\mathbf y_t$, which implies
\begin{equation*}
    h(\mathbf y_t|\mathbf x^{t},\mathbf y^{t-1})
    =
    h(\mathbf y_t|\mathbf x_t,\mathbf y^{t-1}).
\end{equation*}
Therefore,
\begin{align*}
    I(\mathbf x^{t};\mathbf y_t|\mathbf y^{t-1})
    &= h(\mathbf y_t|\mathbf y^{t-1})
       - h(\mathbf y_t|\mathbf x_t,\mathbf y^{t-1}) \\
    &= I(\mathbf x_t;\mathbf y_t|\mathbf y^{t-1}).
\end{align*}
This justifies the simplified form used in \eqref{eq:directed_info_def}, where conditioning on the state history reduces to conditioning on the instantaneous state.
\end{remark}

\subsection{Definitions of Stabilizability and Observability}
In this paper, we distinguish between \textit{boundedness} (used for necessary conditions) and \textit{asymptotic convergence} (used for sufficient conditions). The definitions are formalized as follows.

\begin{definition}[Mean-Square Stabilizability] \label{def:ms_stab}
    The system is said to be \textit{mean-square stabilizable} if there exists a control strategy such that the second moment of the state remains bounded
    \begin{equation*}
        \limsup_{t \to \infty} \mathbb{E}[\|\mathbf x_t\|^2] < \infty.
    \end{equation*}
\end{definition}

\begin{definition}[Asymptotic Mean-Square Stabilizability] \label{def:asymp_ms_stab}
    The system is said to be \textit{asymptotically mean-square stabilizable} if there exists a control strategy such that for any initial state $\mathbf x_0$, the following two properties are satisfied:
    \begin{enumerate}
        \item Stability: $\forall \epsilon > 0, \exists \delta(\epsilon) > 0$ such that if $\mathbb{E}[\|\mathbf x_0\|^2] \le \delta$, then $\mathbb{E}[\|\mathbf x_t\|^2] \le \epsilon$ for all $t \in \mathbb{N}_0$.
        \item Attractivity: $\lim_{t \to \infty} \mathbb{E}[\|\mathbf x_t\|^2] = 0$.
    \end{enumerate}
\end{definition}

\begin{definition}[Mean-Square Observability] \label{def:ms_obs}
    The system is said to be \textit{mean-square observable} if there exists a causal estimation strategy providing a predictive estimate $\hat{\mathbf x}_t$, measurable with respect to $\mathbf y^{t-1}$, such that the second moment of the estimation error $\mathbf e_t := \mathbf x_t - \hat{\mathbf x}_t $ remains bounded in the mean-square sense
    \begin{equation*}
        \limsup_{t \to \infty} \mathbb{E}[\|\mathbf e_t\|^2] < \infty.
    \end{equation*}
\end{definition}

\begin{definition}[Asymptotic Mean-Square Observability] \label{def:asymp_ms_obs}
    The system is said to be \textit{asymptotically mean-square observable} if there exists a causal estimation strategy such that for any initial error $\mathbf e_0$, the following two properties are satisfied:
    \begin{enumerate}
        \item Stability: $\forall \epsilon > 0, \exists \delta(\epsilon) > 0$ such that if $\mathbb{E}[\|\mathbf e_0\|^2] \le \delta$, then $\mathbb{E}[\|\mathbf e_t\|^2] \le \epsilon$ for all $t \in \mathbb{N}_0$.
        \item Attractivity: $\lim_{t \to \infty} \mathbb{E}[\|\mathbf e_t\|^2] = 0$.
    \end{enumerate}
\end{definition}

\begin{remark}
We retain the standard Lyapunov asymptotic notions, with both stability and attractivity, because attractivity alone allows large transient mean-square excursions from small initial errors. Accordingly, the sufficiency results first prove attractivity and then add a local overshoot condition for stability; the converse uses the weaker bounded notions in Definitions~\ref{def:ms_stab} and~\ref{def:ms_obs} for generality.
\end{remark}

\section{Necessary Conditions} \label{sec:necessary}

This section establishes necessary information rate conditions for
mean-square observability and stabilizability.
For noiseless dynamics, we show that the directed information rate from
the unstable state to the observations must satisfy
$\mathcal I(\mathbf z\to\mathbf y)\ge R_{\mathrm{exp}}$.
The same lower bound remains necessary under additive process
disturbances.
Because the directed information rate from the unstable state is generally
difficult to compute, we develop an impossibility criterion for nonlinear
observation models with additive noise by deriving an upper bound on the rate
from the full state through the conditional output covariance.
Any such upper bound below $R_{\mathrm{exp}}$ certifies infeasibility.
In the linear Gaussian setting, the bound is attained and its exact value
is determined by the steady-state Riccati equation.
\subsection{Necessity Results} \label{subsec:theorems}

\begin{theorem}[Necessity for Observability] \label{thm-Necessary-Observability}
    Consider the noiseless linear system defined in \eqref{eq:system_dynamics}. If the system is mean-square observable, then the directed information rate from the unstable states to the observations must satisfy
    \begin{equation*}
       \mathcal{I}(\mathbf z \to \mathbf y) \ge R_{\mathrm{exp}}.
    \end{equation*}
\end{theorem}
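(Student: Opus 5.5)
The plan is a conditional entropy balance for the unstable state $\mathbf z_t$, combined with the maximum-entropy bound linking entropy to mean-square error.

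\textbf{Step 1: one-step recursion.} The control $\mathbf u_t$ is a function of $\mathbf y^{t-1}$ and the dynamics are noiseless. Hence, conditional on $\mathbf y^{t-1}$, the map $\mathbf z_t\mapsto \mathbf z_{t+1}=\mathbf A_u\mathbf z_t+\mathbf B_u\mathbf u_t$ is an affine shift with linear part $\mathbf A_u$. This gives
\begin{equation*}
h(\mathbf z_{t+1}|\mathbf y^{t}) = h(\mathbf z_t|\mathbf y^{t}) + \log_2|\det\mathbf A_u| = h(\mathbf z_t|\mathbf y^{t}) + R_{\mathrm{exp}}.
\end{equation*}
The measurement update is
\begin{equation*}
h(\mathbf z_t|\mathbf y^{t}) = h(\mathbf z_t|\mathbf y^{t-1}) - I(\mathbf z_t;\mathbf y_t|\mathbf y^{t-1}).
\end{equation*}
Here $I(\mathbf z_t;\mathbf y_t|\mathbf y^{t-1})$ is the increment of the directed information $I(\mathbf z^t\to\mathbf y^t)$. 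By the same memoryless argument as in the Remark, conditioning on $\mathbf z^t$ reduces to conditioning on $\mathbf z_t$ once the stable part is accounted for. If the observation depends on $\mathbf s_t$ as well, I would treat $\mathbf z\to\mathbf y$ exactly as defined by the paper's increments $I(\mathbf z_t;\mathbf y_t|\mathbf y^{t-1})$.

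\textbf{Step 2: telescoping.} Summing the two relations from $k=0$ to $t$ yields
\begin{equation*}
h(\mathbf z_{t+1}|\mathbf y^{t}) = h(\mathbf z_0) + (t+1)R_{\mathrm{exp}} - I(\mathbf z^{t}\to\mathbf y^{t}).
\end{equation*}
The starting term $h(\mathbf z_0)$ is finite because $\mathbf x_0$ has finite entropy and finite second moment. Finiteness from above follows from the Gaussian bound. Finiteness from below should follow from $h(\mathbf x_0)>-\infty$, using $h(\mathbf x_0)\le h(\mathbf z_0)+h(\mathbf s_0)+\log|\det \mathbf T^{-1}|$ together with $h(\mathbf s_0)<\infty$.

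\textbf{Step 3: upper bound from mean-square observability.} Take the predictive estimate $\hat{\mathbf x}_{t+1}$, which is $\mathbf y^{t}$-measurable, and let $\hat{\mathbf z}$ be its unstable component. The estimation error $\mathbf z_{t+1}-\hat{\mathbf z}$ is a linear image of $\mathbf e_{t+1}$, so $\mathbb E\|\mathbf z_{t+1}-\hat{\mathbf z}\|^2\le \|\mathbf T\|^2\,\mathbb E\|\mathbf e_{t+1}\|^2\le M$ for all large $t$. Shifting by a $\mathbf y^t$-measurable quantity does not change conditional entropy. The Gaussian maximum-entropy bound, followed by Jensen's inequality over $\mathbf y^t$, then gives
\begin{equation*}
h(\mathbf z_{t+1}|\mathbf y^{t}) \le \tfrac{n_u}{2}\log_2\!\big(2\pi e M/n_u\big) =: C < \infty .
\end{equation*}

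\textbf{Step 4: conclusion.} Combining Steps 2 and 3,
\begin{equation*}
I(\mathbf z^t\to\mathbf y^t)\ge (t+1)R_{\mathrm{exp}} + h(\mathbf z_0) - C.
\end{equation*}
Dividing by $t+1$ and taking $\liminf$ yields $\mathcal I(\mathbf z\to\mathbf y)\ge R_{\mathrm{exp}}$.

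\textbf{Main obstacle.} The delicate point is Step 1. We need all conditional entropies to exist and be finite, so that the subtractions are legitimate. We also need the justification that the control is $\mathbf y^{t-1}$-measurable, so the translation invariance applies conditionally. Finiteness propagates inductively: the expansion step adds a finite constant, and the information increment is bounded above by $h(\mathbf z_t|\mathbf y^{t-1})$ minus something finite. I would handle this by establishing finiteness of $h(\mathbf z_t|\mathbf y^{t-1})$ inductively from the Gaussian upper bound and from $h(\mathbf z_t|\mathbf y^{t-1})\ge h(\mathbf z_t|\mathbf y^{t-1},\mathbf z_0)$. If the latter is degenerate, I would instead argue directly with the identity in the form where $-\infty$ on the left only strengthens the inequality.
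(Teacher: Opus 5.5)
Your proposal is correct and follows the paper's proof essentially step for step. It uses the same one-step entropy recursion, the same telescoping into the rate balance, and a Gaussian maximum-entropy bound on $h(\mathbf z_{t+1}|\mathbf y^{t})$ obtained through the $\mathbf y^{t}$-measurable predictive estimate. The paper first drops the conditioning and bounds $h(\boldsymbol\epsilon_{t+1})$, whereas you bound conditionally and then apply Jensen; the two are equivalent.

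Your subadditivity argument for $h(\mathbf z_0)>-\infty$ supplies the direction the final liminf step actually needs, which the paper only asserts (it gives just the Gaussian upper bound). Note that the auxiliary bound $h(\mathbf z_t|\mathbf y^{t-1})\ge h(\mathbf z_t|\mathbf y^{t-1},\mathbf z_0)$ is always $-\infty$ under the noiseless dynamics, so your fallback argument is what handles the degenerate case.
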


\begin{IEEEproof}[Proof]
The proof relies on analyzing the evolution of the conditional differential entropy of the unstable modes. Let $h_t := h(\mathbf z_t | \mathbf y^{t-1})$ denote the prior differential entropy of the unstable state at time $t$ given past observations.

For the unstable dynamics $\mathbf z_{t+1} = \mathbf A_u \mathbf z_t + \mathbf B_u \mathbf u_t$, the conditional entropy evolves as
\begin{align}
    h(\mathbf z_{t+1} | \mathbf y^{t})
    &= h(\mathbf A_u \mathbf z_t + \mathbf B_u \mathbf u_t | \mathbf y^{t}) \nonumber \\
    &\overset{(a)}{=} h(\mathbf A_u \mathbf z_t | \mathbf y^{t}) \nonumber \\
    &\overset{(b)}{=} h(\mathbf z_t | \mathbf y^{t}) + \log_2 |\det \mathbf A_u| \nonumber \\
    &= h(\mathbf z_t | \mathbf y^{t}) + R_{\mathrm{exp}}, \label{eq:entropy_evolution}
\end{align}
where (a) holds because $\mathbf u_t$ is measurable with respect to $\mathbf y^{t-1}$ and hence known under the conditioning on $\mathbf y^{t}$, so the conditionally deterministic shift $\mathbf B_u\mathbf u_t$ does not change differential entropy. Equality (b) follows from the change-of-variables formula for differential entropy under the invertible linear transformation $\mathbf A_u$.
Using the identity $I(\mathbf z; \mathbf y | \mathbf x) = h(\mathbf z|\mathbf x) - h(\mathbf z|\mathbf y,\mathbf x)$, we relate the posterior entropy to the prior entropy through
\begin{equation}
    h(\mathbf z_t | \mathbf y^{t}) = h(\mathbf z_t | \mathbf y^{t-1}) - I(\mathbf z_t; \mathbf y_t | \mathbf y^{t-1}).
    \label{eq:posterior_prior}
\end{equation}
Substituting \eqref{eq:posterior_prior} into \eqref{eq:entropy_evolution}, we obtain the recursion for the prior entropy
\begin{equation}
    h_{t+1} = h_t + R_{\mathrm{exp}} - I(\mathbf z_t; \mathbf y_t | \mathbf y^{t-1}).
    \label{eq:recursion}
\end{equation}
Summing the recursion \eqref{eq:recursion} from $k=0$ to $t$ gives
\begin{equation*}
    \sum_{k=0}^{t} (h_{k+1} - h_k) = (t+1)R_{\mathrm{exp}} - \sum_{k=0}^{t} I(\mathbf z_k; \mathbf y_k | \mathbf y^{k-1}).
\end{equation*}
Recognizing the telescoping sum on the left and the definition of cumulative directed information on the right, we divide by $t+1$ to obtain
\begin{equation*}
    \frac{h_{t+1} - h_0}{t+1} = R_{\mathrm{exp}} - \frac{1}{t+1} I(\mathbf z^{t} \to \mathbf y^{t}).
\end{equation*}
Rearranging the directed information rate balance gives
\begin{equation}
    \frac{1}{t+1} I(\mathbf z^{t} \to \mathbf y^{t}) = R_{\mathrm{exp}} + \frac{h_0}{t+1} - \frac{h_{t+1}}{t+1}.
    \label{eq:rate_balance_final}
\end{equation}

We now show that $\limsup_{t \to \infty} \frac{h_{t+1}}{t+1} \le 0$. First, we establish the boundedness of the estimation error for the unstable subsystem. Partitioning the transformed error vector according to the state decomposition gives
\begin{equation*}
    \mathbf T\mathbf e_t
    =
    \begin{bmatrix}
        \boldsymbol\epsilon_t\\
        \boldsymbol\eta_t
    \end{bmatrix},
\end{equation*}
where $\boldsymbol\epsilon_t:=\mathbf z_t-\hat{\mathbf z}_t$ is the
error in the unstable state and $\boldsymbol\eta_t$ is the error in the stable state.
 The assumption of \textit{Mean-Square Observability} implies that the total estimation error $\mathbf e_t := \mathbf x_t - \hat{\mathbf x}_t$ is uniformly bounded, i.e., $\limsup_{t \to \infty} \mathbb{E}[\|\mathbf e_t\|^2] \le C_{e} < \infty$.
Since the squared norm of the unstable component $\boldsymbol\epsilon_t$
is bounded by the total transformed error, we have
\begin{equation*}
    \mathbb{E}[\|\boldsymbol\epsilon_t\|^2]
    \le \mathbb{E}[\|\mathbf T\mathbf e_t\|^2]
    \le \|\mathbf T\|^2 \mathbb{E}[\|\mathbf e_t\|^2]
    \le \|\mathbf T\|^2 C_{e} := C.
\end{equation*}
Thus, the unstable error variance is uniformly bounded by a constant $C$.

We then bound the entropy $h_{t+1} = h(\mathbf z_{t+1} | \mathbf y^{t})$ using the property that the Gaussian distribution maximizes entropy for a given covariance and the AM-GM inequality relating determinant to trace as follows
\begin{align*}
    h_{t+1}
    &= h(\mathbf z_{t+1} | \mathbf y^{t}) \notag \\
    &\stackrel{(a)}{=} h(\mathbf z_{t+1} - \hat{\mathbf z}_{t+1} | \mathbf y^{t}) \notag \\
    &= h(\boldsymbol{\epsilon}_{t+1} | \mathbf y^{t}) \notag \\
    &\stackrel{(b)}{\le} h(\boldsymbol{\epsilon}_{t+1}) \notag \\
    &\stackrel{(c)}{\le} \frac{1}{2} \log_2 \left( (2\pi e)^{n_u} \det(\boldsymbol\Sigma_{e}) \right) \notag \\
    &\stackrel{(d)}{\le} \frac{n_u}{2} \log_2 \left( \frac{2\pi e}{n_u} \mathrm{Tr}(\boldsymbol\Sigma_{e}) \right) \notag \\
    &\stackrel{(e)}{\le} \frac{n_u}{2} \log_2 \left( \frac{2\pi e}{n_u} C \right) := C_h,
\end{align*}
\noindent where
\begin{itemize}
    \item[(a)] Translation invariance of conditional differential entropy, since $\hat{\mathbf z}_{t+1}$ is determined by $\mathbf y^{t}$.
    \item[(b)] Conditioning reduces entropy.
    \item[(c)] The Gaussian distribution maximizes differential entropy for a fixed covariance matrix $\boldsymbol\Sigma_{e}$.
    \item[(d)] By the AM-GM inequality\cite{beckenbach1961inequalities}, $\det(\boldsymbol\Sigma) \le (\frac{1}{n}\mathrm{Tr}(\boldsymbol\Sigma))^n$.
    \item[(e)] The trace of the error covariance is bounded by $C$ due to the observability assumption.
\end{itemize}
Thus, $C_h$ is a constant derived from the uniform error bound $C$. $h_0 :=h(\mathbf z_{0} | \mathbf y^{-1})= h(\mathbf z_0) $ is finite because the initial state $\mathbf x_0$ is assumed to have a valid probability density function with a finite second moment. Hence, $h_0$ is bounded above by the entropy of a Gaussian distribution with the same covariance. Crucially, both $C_h$ and the initial entropy $h_0$ are finite constants independent of $t$.

Taking the limit inferior ($t \to \infty$) of \eqref{eq:rate_balance_final} gives
\begin{align*}
    \liminf_{t \to \infty} \frac{1}{t+1} I(\mathbf z^{t} \to \mathbf y^{t})
    &= R_{\mathrm{exp}} + 0 - \limsup_{t \to \infty} \frac{h_{t+1}}{t+1} \nonumber \\
    &\ge R_{\mathrm{exp}} - \lim_{t \to \infty} \frac{C_h}{t+1} \nonumber \\
    &= R_{\mathrm{exp}}.\notag
\end{align*}
This completes the proof.
\end{IEEEproof}

\begin{theorem}[Necessity for Stabilizability] \label{thm-Necessary-Stabilizability}
    If the system is mean-square stabilizable, then the directed information rate from the unstable states to the observations must satisfy
    \begin{equation*}
        \mathcal{I}(\mathbf z \to \mathbf y) \ge R_{\mathrm{exp}}.
    \end{equation*}
\end{theorem}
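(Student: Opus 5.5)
The plan is to reuse the entropy-balance identity \eqref{eq:rate_balance_final} from the proof of Theorem~\ref{thm-Necessary-Observability}. Its derivation never used observability: it used only the dynamics $\mathbf z_{t+1}=\mathbf A_u\mathbf z_t+\mathbf B_u\mathbf u_t$, the fact that $\mathbf u_t$ is measurable with respect to $\mathbf y^{t-1}$, and the chain rule for conditional mutual information. Under the predictive control architecture $\mathbf u_t=\phi(\psi(\mathbf y^{t-1}))$, the identity
\[
\frac{1}{t+1} I(\mathbf z^{t} \to \mathbf y^{t}) = R_{\mathrm{exp}} + \frac{h_0}{t+1} - \frac{h_{t+1}}{t+1}
\]
therefore holds verbatim for any causal control strategy. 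Since $h_0=h(\mathbf z_0)$ is finite, the whole task reduces to showing $\limsup_{t\to\infty} h_{t+1}/(t+1)\le 0$ using mean-square stabilizability in place of observability.

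To bound $h_{t+1}=h(\mathbf z_{t+1}|\mathbf y^{t})$, I will use conditioning reduces entropy to get $h_{t+1}\le h(\mathbf z_{t+1})$. The Gaussian maximum-entropy bound with the AM--GM inequality then gives
\[
h(\mathbf z_{t+1})\le \frac{n_u}{2}\log_2\!\Bigl(\frac{2\pi e}{n_u}\mathbb E[\|\mathbf z_{t+1}\|^2]\Bigr).
\]
Here the covariance trace is at most the second moment, so no centering is needed. Writing $\mathbf z_t$ as the first block of $\mathbf T\mathbf x_t$, I have $\mathbb E[\|\mathbf z_t\|^2]\le \|\mathbf T\|^2\mathbb E[\|\mathbf x_t\|^2]$. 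Mean-square stabilizability gives $\limsup_t \mathbb E[\|\mathbf x_t\|^2]<\infty$, hence a uniform bound $C$ for all large $t$. The finitely many early times do not matter for the limit. Thus $h_{t+1}\le C_h$ eventually, and taking the $\liminf$ of the balance yields $\mathcal I(\mathbf z\to\mathbf y)\ge R_{\mathrm{exp}}$.

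An alternative route is to note that stabilization implies observability: the estimate $\hat{\mathbf x}_t\equiv 0$ gives $\mathbf e_t=\mathbf x_t$, which is bounded in mean square. Theorem~\ref{thm-Necessary-Observability} would then apply directly. I would mention this shortcut, but I would give the direct argument, because in the closed loop the observation process depends on the chosen controller. What must be checked is that the quantity being bounded is the directed information of the actual closed-loop process, which the direct argument makes explicit.

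The main obstacle is technical rather than conceptual. I must make sure $h_{t+1}$ is well defined and not $+\infty$ or undefined, i.e., that $\mathbf z_{t+1}$ given $\mathbf y^t$ has a density. This follows inductively from $\mathbf x_0$ having a density, from $\mathbf A_u$ being invertible, and from the shift $\mathbf B_u\mathbf u_t$ being conditionally deterministic. Note also that the upper bound only needs finite second moments, and a lower divergence $h_{t+1}\to-\infty$ only helps the inequality. I expect this point to require only a brief remark.
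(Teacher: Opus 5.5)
Your proposal is correct and follows essentially the same route as the paper. You reuse the balance identity \eqref{eq:rate_balance_final}, bound $h_{t+1}\le h(\mathbf z_{t+1})$ via the Gaussian maximum-entropy and AM--GM inequalities, dominate the covariance trace by $\mathbb E[\|\mathbf z_{t+1}\|^2]\le\|\mathbf T\|^2\mathbb E[\|\mathbf x_{t+1}\|^2]$ for all large $t$, and conclude by taking the $\liminf$ (your shortcut through Theorem~\ref{thm-Necessary-Observability} with the trivial estimator $\hat{\mathbf x}_t\equiv\mathbf 0$ is also valid, but the paper likewise gives only the direct argument).
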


\begin{IEEEproof}
The proof follows the same entropy balance argument as Theorem~\ref{thm-Necessary-Observability}, with state boundedness replacing boundedness of the estimation error. Details are given in \apprefappendix{app:proof_thm_stab}{A}.
\end{IEEEproof}

Theorems~\ref{thm-Necessary-Observability} and~\ref{thm-Necessary-Stabilizability} rely only on entropy identities and boundedness assumptions; they do not require the observation law to be linear or to have additive noise.
Thus, for any observation mechanism described by a conditional density $p(\mathbf y_t|\mathbf x_t)$, mean-square observability or stabilizability is impossible unless
\[
    \mathcal{I}(\mathbf z \to \mathbf y) \ge R_{\mathrm{exp}}.
\]

\subsection{Extension to Noisy Linear Dynamics} \label{subsec:noisy}

The analysis so far has focused on the noiseless dynamics.
The following proposition establishes that the rate bound derived in Theorem~\ref{thm-Necessary-Observability} remains necessary for mean-square observability and stabilizability in the presence of additive process disturbances.

\begin{proposition}[Robustness to Process Noise] \label{prop:noisy_necessity}
    Consider the model with additive process disturbances given by
    \begin{equation} \label{eq:system_dynamics_noisy}
        \mathbf x_{t+1} = \mathbf A\mathbf x_t + \mathbf B\mathbf u_t + \mathbf w_t, \quad \forall t \in \mathbb{N}_0,
    \end{equation}
    where $\{\mathbf w_t\}_{t\in \mathbb{N}_0}$ is a sequence of mutually independent continuous random vectors with finite differential entropy. For each $t$, $\mathbf w_t$ is independent of $\mathbf x_t$, $\mathbf u_t$, and the observation history $\mathbf y^{t}$.
    If the system is mean-square observable (or stabilizable), the necessary condition derived for the noiseless case remains valid as the lower bound
    \begin{equation*}
       \mathcal{I}(\mathbf z \to \mathbf y) \ge R_{\mathrm{exp}}.\notag
    \end{equation*}
\end{proposition}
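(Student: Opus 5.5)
The plan is to rerun the entropy balance of Theorem~\ref{thm-Necessary-Observability}, now with the unstable-mode dynamics $\mathbf z_{t+1}=\mathbf A_u\mathbf z_t+\mathbf B_u\mathbf u_t+\mathbf w_t^u$, where $\mathbf w_t^u$ is the unstable block of $\mathbf T\mathbf w_t$. The only change is that the time update in \eqref{eq:entropy_evolution} is no longer an equality. It becomes an inequality that goes in the favorable direction.

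\textbf{Step 1: the time update.} Condition on $\mathbf y^t$. Since $\mathbf u_t$ is $\mathbf y^{t-1}$-measurable, the shift $\mathbf B_u\mathbf u_t$ drops out as before. The noise $\mathbf w_t$ is independent of $(\mathbf x_t,\mathbf y^t)$, so by the standard fact that adding an independent term does not decrease entropy,
\begin{equation*}
h(\mathbf A_u\mathbf z_t+\mathbf w_t^u|\mathbf y^t)\ \ge\ h(\mathbf A_u\mathbf z_t+\mathbf w_t^u|\mathbf y^t,\mathbf w_t^u)=h(\mathbf z_t|\mathbf y^t)+R_{\mathrm{exp}}.
\end{equation*}
Here I use $h(X+W|W,Y)=h(X|W,Y)=h(X|Y)$ for $W$ independent of $(X,Y)$. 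The finite entropy of $\mathbf w_t$, together with the assumption that the relevant conditional densities exist, ensures these entropies are well defined and not $-\infty$. If $\mathbf w_t^u$ happens to be degenerate, the inequality still holds, because the conditioning step uses only independence.

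\textbf{Step 2: the measurement update and telescoping.} The measurement-update identity \eqref{eq:posterior_prior} is unchanged. Combining it with Step 1 gives $h_{t+1}\ge h_t+R_{\mathrm{exp}}-I(\mathbf z_t;\mathbf y_t|\mathbf y^{t-1})$. Summing from $0$ to $t$ yields
\begin{equation*}
\frac{1}{t+1}I(\mathbf z^t\to\mathbf y^t)\ \ge\ R_{\mathrm{exp}}+\frac{h_0-h_{t+1}}{t+1}.
\end{equation*}

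\textbf{Step 3: a uniform upper bound on $h_{t+1}$.} This is where the observability or stabilizability hypothesis enters, and the argument is the same as before. Under mean-square observability, $h_{t+1}=h(\boldsymbol\epsilon_{t+1}|\mathbf y^t)\le h(\boldsymbol\epsilon_{t+1})$, and this is bounded by the Gaussian maximum-entropy estimate combined with AM--GM, giving the constant $C_h$. Under stabilizability, the same chain applies with $\mathbf z_{t+1}$ itself, using $\mathbb E\|\mathbf z_{t+1}\|^2\le\|\mathbf T\|^2\mathbb E\|\mathbf x_{t+1}\|^2$, exactly as in Appendix~\ref{app:proof_thm_stab}. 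The bounded second-moment hypotheses are what make this step go through. Note that we need only upper bounds on $h_{t+1}$, so noise can only help this step. Finally, $h_0$ is finite by assumption.

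The main obstacle is the rigor of Step 1. Conditioning on the continuous variable $\mathbf y^t$ requires that the conditional entropies be well defined, and the independence of $\mathbf w_t$ from $(\mathbf z_t,\mathbf y^t)$ must be used jointly, not merely marginally. I would handle this by stating the conditional-independence version of the entropy-power-type monotonicity explicitly. Once Step 1 holds, take $\liminf$ in the Step 2 inequality: the term $h_0/(t+1)$ vanishes and $h_{t+1}/(t+1)\le C_h/(t+1)\to0$, which gives $\mathcal I(\mathbf z\to\mathbf y)\ge R_{\mathrm{exp}}$.
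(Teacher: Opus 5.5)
Your proposal is correct and follows the paper's proof essentially step for step. Like the paper, you replace the time-update equality by the inequality $h(\mathbf A_u\mathbf z_t+\mathbf w_{z,t}|\mathbf y^t)\ge h(\mathbf A_u\mathbf z_t|\mathbf y^t)$ using joint independence of the noise from $(\mathbf z_t,\mathbf y^t)$, keep the measurement-update identity, telescope, and reuse the Gaussian maximum-entropy/AM--GM bound on $h_{t+1}$ from the noiseless theorems; your conditioning step $h(X+W|Y)\ge h(X+W|Y,W)=h(X|Y)$ is simply the standard proof of the inequality $h(\mathbf a+\mathbf b)\ge h(\mathbf a)$ that the paper cites in its step (b).
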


\begin{IEEEproof}
    The proof mirrors that of Theorem~\ref{thm-Necessary-Observability}, with a key modification in the entropy evolution step to account for the noise.
    The conditional entropy of the next unstable state satisfies
    \begin{align}
        h(\mathbf z_{t+1} | \mathbf y^{t})
        &= h(\mathbf A_u \mathbf z_t + \mathbf B_u \mathbf u_t + \mathbf w_{z,t} | \mathbf y^{t}) \notag \\
        &\stackrel{(a)}{=} h(\mathbf A_u \mathbf z_t + \mathbf w_{z,t} | \mathbf y^{t}) \notag \\
        &\stackrel{(b)}{\ge} h(\mathbf A_u \mathbf z_t | \mathbf y^{t}) \notag \\
        &= h(\mathbf z_t | \mathbf y^{t}) + \log_2 |\det \mathbf A_u| \notag \\
        &= h(\mathbf z_t | \mathbf y^{t}) + R_{\mathrm{exp}}, \label{eq:noisy_entropy_inequality}
    \end{align}
    where:
    \begin{itemize}
        \item[(a)] Translation invariance.
        \item[(b)] This step relies on the entropy inequality $h(\mathbf a+\mathbf b) \ge h(\mathbf a)$ for independent random vectors $\mathbf a$ and $\mathbf b$. Here, $\mathbf w_{z,t}$ is independent of $(\mathbf z_t,\mathbf y^{t})$, and therefore remains independent of $\mathbf A_u\mathbf z_t$ under the conditioning on $\mathbf y^{t}$, where $\mathbf T\mathbf w_t=[\mathbf w_{z,t}^T,\mathbf w_{s,t}^T]^T$.
    \end{itemize}
    The equality in the noiseless case \eqref{eq:entropy_evolution} now becomes an inequality. Substituting the mutual information relation $h(\mathbf z_t | \mathbf y^{t}) = h(\mathbf z_t | \mathbf y^{t-1}) - I(\mathbf z_t; \mathbf y_t | \mathbf y^{t-1})$ into \eqref{eq:noisy_entropy_inequality}, we obtain the recurrence
    \begin{equation*}
        h_{t+1} \ge h_t + R_{\mathrm{exp}} - I(\mathbf z_t; \mathbf y_t | \mathbf y^{t-1}).\notag
    \end{equation*}
    Summing from $k=0$ to $t$ and dividing by $t+1$ gives
    \begin{equation*}
        \frac{1}{t+1} I(\mathbf z^{t} \to \mathbf y^{t}) \ge R_{\mathrm{exp}} + \frac{h_0}{t+1} - \frac{h_{t+1}}{t+1}.\notag
    \end{equation*}
    Given the mean-square observability (or stabilizability) assumption, the state/error covariance is bounded, ensuring that $\limsup_{t \to \infty} \frac{h_{t+1}}{t+1} \le 0$ and $\lim_{t \to \infty} \frac{h_{0}}{t+1} = 0$ (as derived in the proofs of Theorems~\ref{thm-Necessary-Observability} and~\ref{thm-Necessary-Stabilizability}).
    Taking the limit inferior gives
    \begin{equation*}
        \liminf_{t \to \infty} \frac{1}{t+1} I(\mathbf z^{t} \to \mathbf y^{t}) \ge R_{\mathrm{exp}}.
    \end{equation*}
\end{IEEEproof}

\subsection{A Computable Impossibility Criterion}
\label{subsec:computable_relaxations}

The directed information rate from the unstable state is generally unavailable in
closed form. A direct infeasibility test can instead be obtained from the
computable upper bound $\mathcal I_{\max}$ on the directed
information rate from the full state developed below.
Consider the noisy dynamical system in
\eqref{eq:system_dynamics_noisy}, observed through
\begin{equation}
\label{eq:nonlinear_additive_observation}
    \mathbf y_t=g_t(\mathbf x_t)+\mathbf v_t,
    \qquad t\in\mathbb N_0,
\end{equation}
where $g_t:\mathbb R^n\to\mathbb R^p$ is deterministic and
possibly nonlinear. The sequence
$\{\mathbf v_t\}_{t\in\mathbb N_0}$ is mutually independent, has finite
differential entropies, and is independent of the process noise; moreover,
$\mathbf v_t$ is independent of
$(\mathbf x_t,\mathbf y^{t-1})$ for every $t$.
Define the conditional output covariance
\begin{equation*}
    \boldsymbol\Gamma_t
    :=
    \mathrm{Cov}(\mathbf y_t\mid\mathbf y^{t-1}).
\end{equation*}
Let $\{\mathbf S_t\succ\mathbf 0\}_{t\in\mathbb N_0}$ be a
deterministic sequence satisfying
\begin{equation}
\label{eq:conditional_output_covariance_bound}
    \boldsymbol\Gamma_t
    \preceq \mathbf S_t,
    \qquad \text{almost surely}.
\end{equation}
Define
\begin{equation*}
\begin{aligned}
    \mathcal I_{\max}
    &:={}
    \liminf_{T\to\infty}\frac1{T+1}
    \sum_{t=0}^{T}
    \left[
        \frac12\log_2\det(2\pi e\mathbf S_t)
        -h(\mathbf v_t)
    \right].
\end{aligned}
\end{equation*}

\begin{proposition}[Upper Bound for Nonlinear Observations with Additive Noise]
\label{prop:upper_bound_additive}
Under the preceding observation model \eqref{eq:nonlinear_additive_observation} and covariance bound \eqref{eq:conditional_output_covariance_bound}, the directed information rate from the full state is bounded above by $\mathcal I_{\max}$
\begin{equation*}
    \mathcal I(\mathbf x\to\mathbf y)
    \le \mathcal I_{\max}.
\end{equation*}
Moreover, if $\mathbf v_t\sim\mathcal N(\mathbf 0,\mathbf R)$ with
$\mathbf R\succ\mathbf 0$, then
\begin{align*}
    \mathcal I_{\max}
    =\liminf_{T\to\infty}\frac1{T+1}
    \sum_{t=0}^{T}
    \frac12\log_2\det\!\left(
        \mathbf S_t\mathbf R^{-1}
    \right).
\end{align*}
\end{proposition}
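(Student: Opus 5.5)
Our plan is to bound each increment of the directed information separately and then average. Fix $t$. Since the observation is memoryless, the remark following the definition of directed information gives
\begin{equation*}
    I(\mathbf x_t;\mathbf y_t|\mathbf y^{t-1})
    = h(\mathbf y_t|\mathbf y^{t-1}) - h(\mathbf y_t|\mathbf x_t,\mathbf y^{t-1}).
\end{equation*}

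We treat the two terms separately.

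\textbf{The noise term.} Conditional on $(\mathbf x_t,\mathbf y^{t-1})$, the observation $\mathbf y_t$ is the shift $g_t(\mathbf x_t)$ plus $\mathbf v_t$, and $\mathbf v_t$ is independent of $(\mathbf x_t,\mathbf y^{t-1})$. Translation invariance of differential entropy then gives $h(\mathbf y_t|\mathbf x_t,\mathbf y^{t-1})=h(\mathbf v_t)$ exactly.

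\textbf{The output term.} For each realization of $\mathbf y^{t-1}$, the conditional law of $\mathbf y_t$ has covariance $\boldsymbol\Gamma_t$. The Gaussian maximum-entropy bound gives $h(\mathbf y_t|\mathbf y^{t-1}=\cdot)\le \frac12\log_2\det(2\pi e\boldsymbol\Gamma_t)$. Since $\boldsymbol\Gamma_t\preceq\mathbf S_t$ almost surely and $\log\det$ is monotone on the positive semidefinite cone, this is at most $\frac12\log_2\det(2\pi e\mathbf S_t)$. If $\boldsymbol\Gamma_t$ happens to be singular, the bound holds trivially, because the entropy is then $-\infty$ or undefined in the favorable direction. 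Averaging over $\mathbf y^{t-1}$ gives the same deterministic bound for $h(\mathbf y_t|\mathbf y^{t-1})$.

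\textbf{Summing and the Gaussian case.} Each increment is therefore at most $\frac12\log_2\det(2\pi e\mathbf S_t)-h(\mathbf v_t)$. Summing over $t=0,\dots,T$, dividing by $T+1$, and taking $\liminf$ on both sides yields $\mathcal I(\mathbf x\to\mathbf y)\le\mathcal I_{\max}$. This step is legitimate because the inequality holds termwise for every $T$. For Gaussian $\mathbf v_t\sim\mathcal N(\mathbf 0,\mathbf R)$, substitute $h(\mathbf v_t)=\frac12\log_2\det(2\pi e\mathbf R)$. The difference of the two log-determinants is $\frac12\log_2\det(\mathbf S_t\mathbf R^{-1})$, which gives the stated formula.

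\textbf{Main obstacle.} The delicate point is the step where the per-realization maximum-entropy bound is converted into the averaged conditional entropy. We need $h(\mathbf y_t|\mathbf y^{t-1}=\cdot)$ to be well defined, which requires $\mathbf y_t$ to have a conditional density. This holds here because $\mathbf v_t$ has a density with finite entropy and is independent of $(\mathbf x_t,\mathbf y^{t-1})$, so the conditional law of $\mathbf y_t$ is a convolution with that density. We also need the conditional second moments to be finite, and these are controlled almost surely by $\mathbf S_t$. With both facts in hand, the expectation of the per-realization bound is justified, and the rest is bookkeeping.
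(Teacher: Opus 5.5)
Your proposal is correct and follows essentially the same route as the paper. Like the paper, you write each increment as $h(\mathbf y_t|\mathbf y^{t-1})-h(\mathbf v_t)$, apply the Gaussian maximum-entropy bound for each realization of $\mathbf y^{t-1}$, use monotonicity of $\log\det$ under $\boldsymbol\Gamma_t\preceq\mathbf S_t$, and then sum, normalize, take $\liminf$, and substitute $h(\mathbf v_t)$ in the Gaussian case. Your aside on a singular $\boldsymbol\Gamma_t$ is harmless but unnecessary: $\boldsymbol\Gamma_t=\mathrm{Cov}(g_t(\mathbf x_t)|\mathbf y^{t-1})+\mathrm{Cov}(\mathbf v_t)\succeq\mathrm{Cov}(\mathbf v_t)\succ\mathbf 0$, because $\mathbf v_t$ has a density, so that case cannot occur.
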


\begin{IEEEproof}
Under the observation model in
\eqref{eq:nonlinear_additive_observation},
\begin{equation*}
I(\mathbf x_t;\mathbf y_t|\mathbf y^{t-1})
=h(\mathbf y_t|\mathbf y^{t-1})-h(\mathbf v_t).
\end{equation*}
Applying the Gaussian maximum entropy property conditionally on each
fixed observation history and then averaging over that history gives
\begin{align*}
h(\mathbf y_t|\mathbf y^{t-1})
&\le
\frac12\mathbb E_{\mathbf y^{t-1}}\!\left[
\log_2\det\!\left(
2\pi e\,
\boldsymbol\Gamma_t
\right)
\right]\\
&\le
\frac12\log_2\det(2\pi e\mathbf S_t).
\end{align*}
The last inequality follows from
$\boldsymbol\Gamma_t
\preceq\mathbf S_t$ almost surely and the monotonicity of
$\log\det(\cdot)$ on the positive-definite cone. Since
$\mathbf S_t$ is deterministic, the resulting upper bound is
independent of the observation history.
Therefore, for every $T$,
\begin{align*}
&\frac{1}{T+1}
I(\mathbf x^{T}\to\mathbf y^{T})=
\frac{1}{T+1}
\sum_{t=0}^{T}
I(\mathbf x_t;\mathbf y_t|\mathbf y^{t-1})\\
&\le
\frac{1}{T+1}
\sum_{t=0}^{T}
\left[
\frac12\log_2\det(2\pi e\mathbf S_t)
-h(\mathbf v_t)
\right].
\end{align*}
Consequently,
\begin{align*}
\mathcal I(\mathbf x\to\mathbf y)
&=
\liminf_{T\to\infty}
\frac{1}{T+1}
I(\mathbf x^{T}\to\mathbf y^{T})\\
&\le
\liminf_{T\to\infty}
\frac{1}{T+1}
\sum_{t=0}^{T}
\left[
\frac12\log_2\det(2\pi e\mathbf S_t)
-h(\mathbf v_t)
\right]
\\
&=\mathcal I_{\max}.
\end{align*}
The Gaussian expression follows from
$h(\mathbf v_t)=\frac12\log_2\det(2\pi e\mathbf R)$.
\end{IEEEproof}

Whenever mean-square observability or stabilizability is feasible, the
necessary results and Proposition~\ref{prop:upper_bound_additive} imply
\begin{equation}
\label{eq:computable_impossibility_chain}
R_{\mathrm{exp}}
\le
\mathcal I(\mathbf z\to\mathbf y)
\le
\mathcal I(\mathbf x\to\mathbf y)
\le \mathcal I_{\max}.
\end{equation}
Consequently, $\mathcal I_{\max}<R_{\mathrm{exp}}$ certifies that
mean-square observability and stabilizability are infeasible. The
following linear Gaussian specialization evaluates this upper bound
exactly.

Consider the system with independent additive Gaussian process and measurement noise given by
\begin{equation} \label{eq:lg_system_noisy}
    \begin{cases}
        \mathbf x_{t+1} = \mathbf A\mathbf x_t + \mathbf B\mathbf u_t + \mathbf w_t, & \mathbf w_t \sim \mathcal{N}(\mathbf 0, \mathbf W) \\
        \mathbf y_t = \mathbf C\mathbf x_t + \mathbf v_t, & \mathbf v_t \sim \mathcal{N}(\mathbf 0, \mathbf R)
    \end{cases},
\end{equation}
where the vectors $\mathbf w_t \in \mathbb{R}^n$ and $\mathbf v_t \in \mathbb{R}^p$ represent the independent Gaussian process and measurement noise sequences, respectively.
The observation mechanism is linear and characterized by the matrix $\mathbf C \in \mathbb{R}^{p \times n}$.
The noise covariance matrices are assumed to be positive definite, i.e., $\mathbf W \succ \mathbf 0$\footnote{While general formulations typically allow $\mathbf W$ to be positive semidefinite ($\mathbf W \succeq \mathbf 0$) provided that the pair $(\mathbf A, \mathbf W^{1/2})$ is stabilizable, we assume strict positive definiteness here for simplicity of exposition.} and $\mathbf R \succ \mathbf 0$.
The initial state $\mathbf x_0$ is Gaussian with mean $\boldsymbol{\mu}_0$ and covariance $\boldsymbol\Sigma_0$, independent of the noise sequences; equivalently, under the initial estimate $\hat{\mathbf x}_{0|-1}=\boldsymbol{\mu}_0$, the initial prediction error covariance is $\boldsymbol\Sigma_0$.

We use the standard conditions for Kalman covariance convergence. The pair
$(\mathbf A,\mathbf C)$ is \emph{detectable} if there exists
$\mathbf L$ such that
$\mathbf A+\mathbf L\mathbf C$ is Schur, and
$(\mathbf A,\mathbf W^{1/2})$ is \emph{stabilizable} if there exists
$\mathbf K$ such that
$\mathbf A+\mathbf W^{1/2}\mathbf K$ is Schur
\cite[Secs.~4.4 and 5.2]{anderson2005optimal}.

\begin{corollary}[Linear Gaussian Exact Upper Bound]
\label{cor:lg_capacity}
    Consider the linear Gaussian system defined in \eqref{eq:lg_system_noisy}. Assume that the pair $(\mathbf A, \mathbf C)$ is detectable and the pair $(\mathbf A, \mathbf W^{1/2})$ is stabilizable.
    Then the Kalman prediction error covariance converges from any initial condition $\boldsymbol\Sigma_0 \succeq 0$ to a unique positive-definite steady-state matrix $\boldsymbol\Sigma$ that satisfies the \textit{Discrete Algebraic Riccati Equation (DARE)}
    \begin{equation*}
        \boldsymbol\Sigma = \mathbf A\boldsymbol\Sigma\mathbf A^T + \mathbf W - \mathbf A\boldsymbol\Sigma\mathbf C^T \left( \mathbf C\boldsymbol\Sigma\mathbf C^T + \mathbf R \right)^{-1} \mathbf C\boldsymbol\Sigma\mathbf A^T.
    \end{equation*}
    The upper bound in Proposition~\ref{prop:upper_bound_additive} is
    attained, and the directed information rate from the full state is
    \begin{equation} \label{eq:sensing_capacity_formula}
    \begin{aligned}
        \mathcal{I}(\mathbf x \to \mathbf y)
        = \frac{1}{2} \log_2 \det \left( \mathbf I_p+ \mathbf C\boldsymbol\Sigma\mathbf C^T \mathbf R^{-1} \right).
    \end{aligned}
    \end{equation}
\end{corollary}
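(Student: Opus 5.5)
The plan is to compute the directed information exactly through the Kalman filter and then pass to the limit using the standard Riccati convergence theorem. Convergence of the prediction error covariance $\boldsymbol\Sigma_{t|t-1}$ to the unique stabilizing DARE solution $\boldsymbol\Sigma$ from any $\boldsymbol\Sigma_0\succeq\mathbf 0$ is classical under detectability of $(\mathbf A,\mathbf C)$ and stabilizability of $(\mathbf A,\mathbf W^{1/2})$, and I will cite it from \cite{anderson2005optimal}. Positive definiteness of $\boldsymbol\Sigma$ follows from the DARE: the right-hand side equals $\mathbf A(\boldsymbol\Sigma^{-1}+\mathbf C^T\mathbf R^{-1}\mathbf C)^{-1}\mathbf A^T+\mathbf W\succeq\mathbf W\succ\mathbf 0$. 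A continuity argument is needed if $\boldsymbol\Sigma$ is singular, but the form $\mathbf A\mathbf P\mathbf A^T+\mathbf W$ with $\mathbf P\succeq\mathbf 0$ already gives it.

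First, I identify the conditional law of $\mathbf y_t$. The control $\mathbf u_t$ is a measurable function of $\mathbf y^{t-1}$, so conditionally on $\mathbf y^{t-1}$ it is a known shift. By induction, $\mathbf x_t$ given $\mathbf y^{t-1}$ is therefore Gaussian, with mean $\hat{\mathbf x}_{t|t-1}$ and covariance $\boldsymbol\Sigma_{t|t-1}$. This covariance is deterministic: it obeys the Riccati recursion and does not depend on the control policy, even for nonlinear policies, since the policy only translates the conditional mean. It follows that
\begin{equation*}
\boldsymbol\Gamma_t=\mathrm{Cov}(\mathbf y_t|\mathbf y^{t-1})=\mathbf C\boldsymbol\Sigma_{t|t-1}\mathbf C^T+\mathbf R
\end{equation*}
holds almost surely, and $\mathbf y_t$ given $\mathbf y^{t-1}$ is Gaussian.

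Second, I compute the per-step information exactly. Taking $\mathbf S_t:=\boldsymbol\Gamma_t$, which is deterministic, both inequalities in the proof of Proposition~\ref{prop:upper_bound_additive} become equalities, because the conditional output law is Gaussian and $\boldsymbol\Gamma_t=\mathbf S_t$. Hence
\begin{equation*}
I(\mathbf x_t;\mathbf y_t|\mathbf y^{t-1})=\tfrac12\log_2\det\!\left(\mathbf I_p+\mathbf C\boldsymbol\Sigma_{t|t-1}\mathbf C^T\mathbf R^{-1}\right).
\end{equation*}
This is exactly the Gaussian form of $\mathcal I_{\max}$, so the upper bound is attained. Since it is the minimal admissible choice, any valid $\mathbf S_t\succeq\boldsymbol\Gamma_t$ gives an $\mathcal I_{\max}$ at least this large, which supports the sense in which the bound is tight.

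Third, I pass to the limit. Because $\boldsymbol\Sigma_{t|t-1}\to\boldsymbol\Sigma$ and $\log\det(\mathbf I_p+\mathbf C(\cdot)\mathbf C^T\mathbf R^{-1})$ is continuous, the summands converge. Cesàro averaging then turns the liminf into a limit equal to $\frac12\log_2\det(\mathbf I_p+\mathbf C\boldsymbol\Sigma\mathbf C^T\mathbf R^{-1})$.

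The main obstacle is the first step: showing that conditional Gaussianity and the policy-independent covariance survive arbitrary causal feedback. I would handle it by induction on $t$. The inductive step has two parts: the Bayes update of a Gaussian prior under a linear Gaussian likelihood gives a Gaussian posterior, and the prediction step adds the $\mathbf y^{t}$-measurable term $\mathbf B\mathbf u_t$ together with the independent $\mathbf w_t$, which keeps the law Gaussian and leaves the covariance unaffected by the shift.
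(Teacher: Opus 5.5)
Your proposal is correct and follows essentially the same route as the paper. Conditional Gaussianity with a deterministic Kalman covariance gives $\boldsymbol\Gamma_t=\mathbf C\boldsymbol\Sigma_{t|t-1}\mathbf C^T+\mathbf R$, so both inequalities in Proposition~\ref{prop:upper_bound_additive} hold with equality, and Riccati convergence plus Ces\`aro averaging then yields the rate; your induction showing that feedback only shifts the conditional mean, and your check that $\boldsymbol\Sigma\succeq\mathbf W\succ\mathbf 0$, just make explicit what the paper takes from \cite{anderson2005optimal}.
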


\begin{IEEEproof}
Define
\begin{equation*}
    \hat{\mathbf x}_{t|t-1}
    :=\mathbb E[\mathbf x_t|\mathbf y^{t-1}],
    \qquad
    \boldsymbol\Sigma_t
    :=\mathrm{Cov}(\mathbf x_t|\mathbf y^{t-1}).
\end{equation*}
In the linear Gaussian model, $\boldsymbol\Sigma_t$ is
deterministic and follows the Kalman covariance recursion.
The conditional output distribution is Gaussian with covariance
\begin{equation*}
    \mathbf S_t
    =\mathrm{Cov}(\mathbf y_t|\mathbf y^{t-1})
    =\mathbf C\boldsymbol\Sigma_t
      \mathbf C^T+\mathbf R.
\end{equation*}
Thus, Proposition~\ref{prop:upper_bound_additive} applies with this
deterministic sequence $\{\mathbf S_t\}$.
Since $\mathbf y_t|\mathbf y^{t-1}$ and $\mathbf v_t$ are Gaussian,
\begin{equation*}
\begin{aligned}
I(\mathbf x_t;\mathbf y_t|\mathbf y^{t-1})
&=h(\mathbf y_t|\mathbf y^{t-1})-h(\mathbf v_t)\\
&=\frac12\log_2\det(\mathbf S_t\mathbf R^{-1}).
\end{aligned}
\end{equation*}

If $(\mathbf A,\mathbf C)$ is detectable and
$(\mathbf A,\mathbf W^{1/2})$ is stabilizable, the Kalman prediction covariance
converges to the unique positive-definite DARE solution
$\boldsymbol\Sigma$ \cite[Ch.~5]{anderson2005optimal}. Hence
$\mathbf S_t\to
\mathbf S:=\mathbf C\boldsymbol\Sigma
\mathbf C^T+\mathbf R$. Ces\`aro convergence then gives
\begin{align*}
\mathcal I(\mathbf x\to\mathbf y)
&=\frac12\log_2\det(\mathbf S\mathbf R^{-1})\\
&=\frac12\log_2\det\!\left(
\mathbf I_p+
\mathbf C\boldsymbol\Sigma
\mathbf C^T\mathbf R^{-1}
\right),
\end{align*}
which proves \eqref{eq:sensing_capacity_formula}.
\end{IEEEproof}

\section{Sufficient Conditions} \label{sec:sufficient}
All sufficiency results in this section are established for the
noiseless dynamics in \eqref{eq:system_dynamics}. They give conditions under
which the available sensing information drives the estimation error and
closed-loop state to zero in mean square.
We introduce the posterior non-Gaussianity rate $R_{\mathrm{NG}}$ to
quantify the asymptotic gap between the posterior entropy and its covariance-matched Gaussian.
Under uniform bounds on the posterior covariance and its fluctuations, the
condition
$\mathcal I(\mathbf z\to\mathbf y)>
R_{\mathrm{exp}}+R_{\mathrm{NG}}$
guarantees mean-square attractivity of the estimation error.
For a stabilizable plant, certainty-equivalence feedback further guarantees
mean-square attractivity of the closed-loop state.
Together with a local mean-square overshoot bound, these results yield
asymptotic mean-square observability and stabilizability.

When each observation is generated by a nonlinear function of the
unstable state plus additive noise, we further derive a lower bound on the
directed information rate based on entropy power. This bound provides a direct
way to verify the rate condition in the sufficiency theorems
without evaluating directed information exactly.

We then study the class with $R_{\mathrm{NG}}=0$, for which the
sufficient information threshold reduces to $R_{\mathrm{exp}}$.
Linear Gaussian observations and their invertible readout variants
provide direct examples.
We also derive curvature conditions on the likelihood and prior that establish
$R_{\mathrm{NG}}=0$ through eventual posterior log-concavity.

\subsection{Sufficiency Results} \label{subsec:sufficiency_results}
Following the KL measure of non-Gaussianity in
\cite{tulino2006nonGaussianity,reeves2019replica}, define the random
posterior density
\begin{equation*}
p_t(\boldsymbol\xi)
:=
p_{\mathbf z_t\mid\mathbf y^{t}}
(\boldsymbol\xi\mid\mathbf y^{t}),
\end{equation*}
and let $q_t$ denote the density of
$\mathcal N(\hat{\mathbf z}_t,\mathbf P_t)$, the Gaussian distribution
with the same conditional mean and covariance. Both densities are random
through $\mathbf y^t$. Define the resulting random divergence by
\begin{equation*}
    D_t:=D_{\mathrm{KL}}(p_t\|q_t).
\end{equation*}

\begin{definition}[Posterior Non-Gaussianity Rate]
\label{def:posterior_non_gaussianity_rate}
The posterior non-Gaussianity rate is defined as
\begin{equation*}
    R_{\mathrm{NG}}
    :=
    \limsup_{t\to\infty}
    \frac{1}{t+1}\mathbb E[D_t].
\end{equation*}
The expectation is over the observation history $\mathbf y^t$.
The rate takes values in the
extended interval $[0,\infty]$ and is measured in bits/step. In
particular, $R_{\mathrm{NG}}=0$ means that the expected
divergence between the posterior and its moment-matched Gaussian grows sublinearly with time.
\end{definition}

\begin{assumption} \label{assump:cond_number}
The posterior covariance family is assumed to be uniformly regular in the following sense: there exist constants $\kappa$, $M$, and $V$ such that, for all $t$, almost surely,
\begin{align*}
    &\mathbf P_t\succ\mathbf 0,\qquad
    \frac{\lambda_{\max}(\mathbf P_t)}
    {\lambda_{\min}(\mathbf P_t)}
    \le \kappa,\\
    &\mathrm{Tr}(\mathbf P_t)\le M,\qquad
    \mathrm{Var}\!\left[
        \log_2\det(\mathbf P_t)
    \right]\le V.
\end{align*}
\end{assumption}

Assumption~\ref{assump:cond_number} excludes
covariance pathologies that would otherwise prevent contraction of the posterior covariance volume
from implying convergence in mean square.
The trace bound and the bound on the condition number prevent unbounded
uncertainty and highly anisotropic covariance, while the variance bound
rules out expected log-volume collapse caused only by rare observation
histories. These conditions require neither a Gaussian posterior nor a
specific covariance recursion.

\begin{lemma}[Log-volume collapse implies mean-square collapse]\label{lemma:log_volume_to_mse}
    Let $\{\mathbf P_t\}_{t\ge0}$ be a sequence of random positive-definite matrices satisfying Assumption~\ref{assump:cond_number} with dimension $n_u$.
    If
    \begin{equation*}
        \mathbb{E}\!\left[\log_2\det(\mathbf P_t)\right]\to-\infty,
    \end{equation*}
    then
    \begin{equation*}
        \mathbb{E}\!\left[\mathrm{Tr}(\mathbf P_t)\right]\to0.
    \end{equation*}
\end{lemma}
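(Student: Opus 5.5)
The plan is to convert the determinant into a pointwise bound on the trace using the condition-number bound. Then I split the expectation into a typical event and a rare event, controlling the rare event with Chebyshev's inequality. Write $n:=n_u$ and $L_t:=\log_2\det(\mathbf P_t)$. By Assumption~\ref{assump:cond_number}, $\mu_t:=\mathbb E[L_t]$ is finite and $\mathrm{Var}(L_t)\le V$.

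\emph{Step 1 (deterministic trace--determinant bound).} For any positive-definite $\mathbf P$ with $\lambda_{\max}(\mathbf P)/\lambda_{\min}(\mathbf P)\le\kappa$, we have $\lambda_{\min}(\mathbf P)^n\le\det(\mathbf P)$. Hence $\lambda_{\min}(\mathbf P)\le 2^{L/n}$, where $L=\log_2\det\mathbf P$. It follows that
\begin{equation*}
\mathrm{Tr}(\mathbf P)\le n\lambda_{\max}(\mathbf P)\le n\kappa\,\lambda_{\min}(\mathbf P)\le n\kappa\,2^{L/n}.
\end{equation*}
This reverses the AM--GM direction used in the necessity proofs. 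It is exactly where the condition-number bound is needed: without it, a single tiny eigenvalue could make the determinant collapse while the trace stays large.

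\emph{Step 2 (split on the event $E_t:=\{L_t\le\mu_t/2\}$).} Since $\mu_t\to-\infty$, we may assume $\mu_t<0$ for all large $t$. On $E_t$, Step~1 gives $\mathrm{Tr}(\mathbf P_t)\le n\kappa\,2^{\mu_t/(2n)}$. On the complement $E_t^c$, we use only the uniform bound $\mathrm{Tr}(\mathbf P_t)\le M$. The complement satisfies $E_t^c\subseteq\{|L_t-\mu_t|\ge|\mu_t|/2\}$, so Chebyshev's inequality gives $\mathbb P(E_t^c)\le 4V/\mu_t^2$. Combining the two pieces yields
\begin{equation*}
\mathbb E[\mathrm{Tr}(\mathbf P_t)]\le n\kappa\,2^{\mu_t/(2n)}+\frac{4MV}{\mu_t^2}.
\end{equation*}
Both terms tend to zero as $\mu_t\to-\infty$, and $\mathbb E[\mathrm{Tr}(\mathbf P_t)]\ge0$, which proves the claim.

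The main obstacle is the rare-event part. Convergence of $\mathbb E[L_t]$ alone does not control $\mathbb E[2^{L_t/n}]$: a small probability of a large log-determinant can keep the expected trace bounded away from zero, and Jensen's inequality goes the wrong way for the convex map $L\mapsto2^{L/n}$. The variance bound and the trace bound $M$ are used precisely to handle this. Chebyshev makes the bad event vanish in probability, and $M$ caps the trace there.

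The threshold $\mu_t/2$ is the natural first choice, because it makes both the typical-event bound and the Chebyshev bound depend only on $\mu_t$. Any threshold $\theta_t$ with $\theta_t\to-\infty$ and $(\mu_t-\theta_t)^2\to\infty$ would work equally well.
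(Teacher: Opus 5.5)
Your proof is correct and uses the same ingredients as the paper's: the condition-number bound turns determinant collapse into trace collapse, Chebyshev with the variance bound $V$ controls the rare event, and the cap $M$ bounds the trace there. The paper packages these as convergence in probability of $\mathrm{Tr}(\mathbf P_t)$ plus uniform integrability, whereas your explicit event split gives the quantitative bound $n_u\kappa\,2^{\mu_t/(2n_u)}+4MV/\mu_t^2$. One small fix to your closing remark: an alternative threshold must satisfy $\theta_t-\mu_t\to+\infty$, not merely $(\mu_t-\theta_t)^2\to\infty$, because for a threshold below the mean such as $\theta_t=2\mu_t$, Chebyshev gives no control of $\mathbb P(L_t>\theta_t)$ (this probability equals $1$ if $L_t\equiv\mu_t<0$).
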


\begin{IEEEproof}
    Let $L_t:=\log_2\det(\mathbf P_t)$.
    Since $\mathbb{E}[L_t]\to-\infty$ and $\mathrm{Var}(L_t)\le V$, Chebyshev's inequality implies that, for every fixed $a\in\mathbb{R}$,
    \begin{equation*}
        \mathbb{P}(L_t>a)
        \le
        \frac{V}{(a-\mathbb{E}[L_t])^2}
        \to0.
    \end{equation*}
    Hence $\det(\mathbf P_t)\to0$ in probability.
    The bound on the condition number implies that if $\mathrm{Tr}(\mathbf P_t)>\epsilon$, then
    \begin{equation*}
        \epsilon < \sum_{i=1}^{n_u} \lambda_i \le n_u \lambda_{\max}(\mathbf P_t) \le n_u \kappa \lambda_{\min}(\mathbf P_t),
    \end{equation*}
    where $\lambda_i$ are the eigenvalues of $\mathbf P_t$.
    Therefore $\mathbb{P}(\mathrm{Tr}(\mathbf P_t)>\epsilon)\le \mathbb{P}(\det(\mathbf P_t)\ge(\frac{\epsilon}{n_u\kappa})^{n_u})\to0$.
    Since $\mathrm{Tr}(\mathbf P_t)\le M$ almost surely, the sequence $\{\mathrm{Tr}(\mathbf P_t)\}_{t\ge0}$ is uniformly integrable. Its convergence to zero in probability therefore implies
    $\mathbb{E}[\mathrm{Tr}(\mathbf P_t)]\to0$.
\end{IEEEproof}

\begin{theorem}
[Sufficiency for Mean-Square Attractivity of the Estimation Error]
\label{thm-Sufficiency-Observability}
Under Assumption~\ref{assump:cond_number}, if the directed information rate satisfies
\begin{equation*}
    \mathcal{I}(\mathbf z\to\mathbf y)
    >
    R_{\mathrm{exp}}+R_{\mathrm{NG}},
\end{equation*}
then there exists a causal estimator for the system defined in
\eqref{eq:system_dynamics} that achieves mean-square attractivity of the
estimation error, i.e.,
\begin{equation*}
    \lim_{t\to\infty}
    \mathbb{E}\!\left[
        \|\mathbf e_t\|^2
    \right]
    =0.
\end{equation*}
\end{theorem}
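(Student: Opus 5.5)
The plan is to reuse the entropy balance from the proof of Theorem~\ref{thm-Necessary-Observability}, convert posterior entropy into posterior log-determinant through the non-Gaussianity divergence $D_t$, and then apply Lemma~\ref{lemma:log_volume_to_mse}.

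\textbf{Step 1: entropy balance.} The noiseless identity \eqref{eq:rate_balance_final} holds for any causal control, since it uses only the dynamics and the memoryless observation law. Combining it with \eqref{eq:posterior_prior} and \eqref{eq:entropy_evolution} gives, for the posterior entropy $h(\mathbf z_t|\mathbf y^t)=h_{t+1}-R_{\mathrm{exp}}$,
\begin{equation*}
h(\mathbf z_t|\mathbf y^{t}) = h_0 + (t+1)R_{\mathrm{exp}} - I(\mathbf z^{t}\to\mathbf y^{t}).
\end{equation*}

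\textbf{Step 2: from entropy to log-volume.} For each realization of $\mathbf y^t$, $q_t$ is Gaussian with the same covariance $\mathbf P_t$ as $p_t$, so the standard identity $h(p_t)=h(q_t)-D_{\mathrm{KL}}(p_t\|q_t)$ holds. Averaging over $\mathbf y^t$ gives
\begin{equation*}
\tfrac12\mathbb E[\log_2\det(2\pi e\mathbf P_t)] = h(\mathbf z_t|\mathbf y^t) + \mathbb E[D_t].
\end{equation*}
Therefore
\begin{equation*}
\tfrac12\mathbb E[\log_2\det\mathbf P_t] = c + h_0 + (t+1)R_{\mathrm{exp}} - I(\mathbf z^t\to\mathbf y^t) + \mathbb E[D_t],
\end{equation*}
where $c$ is a constant.

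\textbf{Step 3: linear divergence to $-\infty$.} Set $\delta := \mathcal I(\mathbf z\to\mathbf y)-R_{\mathrm{exp}}-R_{\mathrm{NG}}>0$. The hypothesis forces $R_{\mathrm{NG}}<\infty$. By the definition of $\liminf$, for all large $t$ we have $I(\mathbf z^t\to\mathbf y^t)\ge (t+1)(\mathcal I-\delta/3)$. By the definition of $\limsup$, $\mathbb E[D_t]\le (t+1)(R_{\mathrm{NG}}+\delta/3)$. Substituting both into Step 2,
\begin{equation*}
\mathbb E[\log_2\det\mathbf P_t] \le 2c + 2h_0 - \tfrac{2\delta}{3}(t+1) \to -\infty.
\end{equation*}
If $\mathcal I=\infty$ the same argument works with any large constant in place of $\mathcal I-\delta/3$. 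Lemma~\ref{lemma:log_volume_to_mse}, which applies under Assumption~\ref{assump:cond_number}, then gives $\mathbb E[\mathrm{Tr}(\mathbf P_t)]\to0$.

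\textbf{Step 4: the estimator.} Take the MMSE estimate $\hat{\mathbf z}_t=\mathbb E[\mathbf z_t|\mathbf y^t]$, so that $\mathbb E\|\mathbf z_t-\hat{\mathbf z}_t\|^2=\mathbb E[\mathrm{Tr}\,\mathbf P_t]\to0$. The stable component must also be handled:
\begin{itemize}
\item With $\mathbf u_t$ known, run the open-loop predictor $\hat{\mathbf s}_{t+1}=\mathbf A_s\hat{\mathbf s}_t+\mathbf B_s\mathbf u_t$.
\item Its error obeys $\boldsymbol\eta_{t+1}=\mathbf A_s\boldsymbol\eta_t$. Since $\mathbf A_s$ is Schur and $\mathbf x_0$ has a finite second moment, $\mathbb E\|\boldsymbol\eta_t\|^2\to0$.
\end{itemize}
Then $\mathbf e_t=\mathbf T^{-1}[\boldsymbol\epsilon_t^T,\boldsymbol\eta_t^T]^T$ gives $\mathbb E\|\mathbf e_t\|^2\le\|\mathbf T^{-1}\|^2(\mathbb E\|\boldsymbol\epsilon_t\|^2+\mathbb E\|\boldsymbol\eta_t\|^2)\to0$.

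\textbf{Timing subtlety.} Definition~\ref{def:ms_obs} uses a predictive estimate, but the argument above produces a filtered one. If the predictive version is required, use $\hat{\mathbf z}_t^-=\mathbf A_u\hat{\mathbf z}_{t-1}+\mathbf B_u\mathbf u_{t-1}$. Its error is $\mathbf A_u\boldsymbol\epsilon_{t-1}$, whose mean square is at most $\|\mathbf A_u\|^2\mathbb E[\mathrm{Tr}\,\mathbf P_{t-1}]\to0$.

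\textbf{Main obstacle.} The delicate step is Step 2–3: posterior entropy must be bridged to the expected log-determinant uniformly in $t$. This needs care with the finiteness of $h(\mathbf z_t|\mathbf y^t)$ and of $\mathbb E[D_t]$, so that the subtractions are not of the form $\infty-\infty$, and it is exactly where $R_{\mathrm{NG}}$ enters. I would first check that $\mathbb E[D_t]<\infty$ for all large $t$, which follows from the $\limsup$ being finite. The remaining finiteness follows from $h(q_t)\le\frac12\log_2\det(2\pi e\mathbf P_t)$, which is bounded above by the trace bound $M$.
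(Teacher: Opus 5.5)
Your proposal is correct and follows essentially the same route as the paper: the entropy balance, the identity $\tfrac12\mathbb E[\log_2\det(2\pi e\mathbf P_t)]=h(\mathbf z_t|\mathbf y^t)+\mathbb E[D_t]$, linear divergence of the expected log-determinant, Lemma~\ref{lemma:log_volume_to_mse}, and the posterior-mean estimator with an open-loop predictor for the stable modes; your treatment of $\mathcal I=\infty$ and of the predictive-versus-filtered timing are small additions the paper's write-up omits. One slip: Step~1 should read $h(\mathbf z_t|\mathbf y^{t})=h_0+tR_{\mathrm{exp}}-I(\mathbf z^{t}\to\mathbf y^{t})$ rather than $(t+1)R_{\mathrm{exp}}$, but since $R_{\mathrm{exp}}\ge 0$ your expression is a valid upper bound, so the conclusion is unaffected.
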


\begin{IEEEproof}
We first analyze the evolution of the conditional differential entropy
of the unstable state. The conditional entropies below are averaged over
their random conditioning histories. For every $k\ge0$, the control input $\mathbf u_k$
is known under conditioning on $\mathbf y^{k}$. Hence,
\begin{align}
h(\mathbf z_{k+1}|\mathbf y^{k})
&=
h(\mathbf A_u\mathbf z_k
+\mathbf B_u\mathbf u_k|\mathbf y^{k})
\notag\\
&=
h(\mathbf z_k|\mathbf y^{k})
+\log_2|\det(\mathbf A_u)|
\notag\\
&=
h(\mathbf z_k|\mathbf y^{k-1}) -I(\mathbf z_k;\mathbf y_k|\mathbf y^{k-1})
+R_{\mathrm{exp}}.
\label{eq:entropy_recursion}
\end{align}
The second equality follows from translation invariance and the
change-of-variables formula for differential entropy, while the last
equality follows from the conditional mutual information identity.

Summing \eqref{eq:entropy_recursion} from $k=0$ to $k=t-1$, with
$\mathbf y^{-1}$ denoting the empty observation history, gives
\begin{align}
h(\mathbf z_t|\mathbf y^{t-1})
&=
h(\mathbf z_0)
+tR_{\mathrm{exp}}
-\sum_{k=0}^{t-1}
I(\mathbf z_k;\mathbf y_k|\mathbf y^{k-1})
\notag\\
&=
h(\mathbf z_0)
+tR_{\mathrm{exp}}
-I(\mathbf z^{t-1}\to\mathbf y^{t-1}).
\label{eq:terminal_entropy}
\end{align}
The strict rate condition can hold only if $R_{\mathrm{NG}}<\infty$;
otherwise, its right-hand side equals $+\infty$. Define the strict rate
margin
\begin{equation*}
\delta
:=
\mathcal{I}(\mathbf z\to\mathbf y)
-R_{\mathrm{exp}}-R_{\mathrm{NG}}
>0,
\end{equation*}
and fix a constant $\epsilon$ such that $0<\epsilon<\delta/3$. The same
deterministic $\epsilon$ is used in the two asymptotic bounds below.

By an index shift in the definition of the directed information rate,
\begin{equation*}
\mathcal I(\mathbf z\to\mathbf y)
=
\liminf_{t\to\infty}
\frac{1}{t}
I(\mathbf z^{t-1}\to\mathbf y^{t-1}).
\end{equation*}
Hence, for the chosen $\epsilon$, there exists a time
$t_1=t_1(\epsilon)<\infty$ such that, for every $t\ge t_1$,
\begin{equation*}
I(\mathbf z^{t-1}\to\mathbf y^{t-1})
\ge
t\bigl(\mathcal I(\mathbf z\to\mathbf y)-\epsilon\bigr).
\end{equation*}
Substituting this bound into \eqref{eq:terminal_entropy} and then
conditioning additionally on $\mathbf y_t$ give
\begin{align}
h(\mathbf z_t|\mathbf y^{t})
&\le
h(\mathbf z_t|\mathbf y^{t-1})
\notag\\
&\le
h(\mathbf z_0)
-t\bigl(
\mathcal I(\mathbf z\to\mathbf y)
-R_{\mathrm{exp}}-\epsilon
\bigr),
\quad t\ge t_1.
\label{eq:entropy_divergence}
\end{align}

By the definition of $R_{\mathrm{NG}}$ as a limit superior, for the same
$\epsilon$ there exists a time
$t_2=t_2(\epsilon)<\infty$ such that, for every $t\ge t_2$,
\begin{equation}
\label{eq:non_gaussianity_tail_bound}
\mathbb E[D_t]
\le
(t+1)(R_{\mathrm{NG}}+\epsilon)
<
\infty.
\end{equation}
Set
\begin{equation*}
t_0:=\max\{t_1(\epsilon),t_2(\epsilon)\}.
\end{equation*}
The times $t_1$, $t_2$, and $t_0$ are independent of
the realized observation history.

The condition requiring finite variance in
Assumption~\ref{assump:cond_number} implies
\begin{equation*}
\mathbb E\!\left[
\left|
\log_2\det
\mathbf P_t
\right|
\right]
<\infty.
\end{equation*}
For each fixed observation history $\mathbf y^{t}=\mathbf a^{t}$, the
moment-matched Gaussian $q_t$ satisfies
\begin{align*}
D_t
&=
h(q_t)-h(p_t).
\end{align*}
Because $q_t$ has the same covariance as the posterior for the fixed
observation history, its entropy is
\begin{align*}
h(q_t)
&=
\frac12\log_2\!\Bigl(
(2\pi e)^{n_u}
\det\mathrm{Cov}(\mathbf z_t|
\mathbf y^{t}=\mathbf a^{t})
\Bigr).
\end{align*}
For every $t\ge t_0$, the preceding KL divergence and log-determinant
integrability bounds allow this identity to be integrated with respect to
$\mathbf y^{t}$. Taking expectation and rearranging the terms gives
\begin{align}
&\frac{1}{2}
\mathbb{E}\!\left[
\log_2\det(\mathbf P_t)
\right]
+\frac{n_u}{2}\log_2(2\pi e)
=
h(\mathbf z_t|\mathbf y^{t})
+
\mathbb{E}[D_t],
\label{eq:expected_entropy_deficit}
\end{align}
where
\begin{equation*}
h(\mathbf z_t|\mathbf y^{t})
:=
\int h(p_t)
p_{\mathbf y^{t}}(\mathbf y^{t})
\,\mathrm d\mathbf y^{t}.
\end{equation*}
Combining \eqref{eq:entropy_divergence},
\eqref{eq:non_gaussianity_tail_bound}, and
\eqref{eq:expected_entropy_deficit}, for every $t\ge t_0$ we obtain
\begin{align*}
\frac{1}{2}
\mathbb{E}\!\left[
\log_2\det(\mathbf P_t)
\right]
&\le
h(\mathbf z_0)
-t\bigl(
\mathcal I(\mathbf z\to\mathbf y)
-R_{\mathrm{exp}}-\epsilon
\bigr)\\
&+(t+1)(R_{\mathrm{NG}}+\epsilon)
-\frac{n_u}{2}\log_2(2\pi e)\\
&=
C_\epsilon-t(\delta-2\epsilon) < C_\epsilon-t\epsilon
\to-\infty,
\end{align*}
where
\begin{equation*}
C_\epsilon
:=
h(\mathbf z_0)+R_{\mathrm{NG}}+\epsilon
-\frac{n_u}{2}\log_2(2\pi e)
<\infty.
\end{equation*}
Therefore,
\begin{equation*}
\mathbb{E}\!\left[
\log_2\det(\mathbf P_t)
\right]
\to-\infty.
\end{equation*}

Applying Lemma~\ref{lemma:log_volume_to_mse} under
Assumption~\ref{assump:cond_number} gives
\begin{equation*}
\mathbb{E}\!\left[
\mathrm{Tr}(\mathbf P_t)
\right]
\to0.
\end{equation*}
For the posterior mean estimator,
$\boldsymbol{\epsilon}_t:=\mathbf z_t-\hat{\mathbf z}_t$, and the tower
property gives
\begin{equation*}
\mathbb{E}\!\left[\|\boldsymbol{\epsilon}_t\|^2\right]
=
\mathbb{E}\!\left[
\mathrm{Tr}(\mathbf P_t)
\right]
\to0.
\end{equation*}

For the stable modes, choose the estimator to propagate the stable prediction with the same known control input. Since $\mathbf A_s$ is Schur and the dynamics are noiseless, the stable estimation error satisfies $\boldsymbol{\eta}_{t+1}=\mathbf A_s\boldsymbol{\eta}_t$, and hence $\mathbb{E}[\|\boldsymbol{\eta}_t\|^2]\to0$. Therefore,
\begin{equation*}
    \mathbb{E}[\|\mathbf e_t\|^2]
    \le
    \|\mathbf T^{-1}\|^2
    \left(
        \mathbb{E}[\|\boldsymbol\epsilon_t\|^2]
        +
        \mathbb{E}[\|\boldsymbol\eta_t\|^2]
    \right)
    \to0.
\end{equation*}
Thus, the estimation error in the original coordinates converges to zero in mean square,
which proves the attractivity condition.

\end{IEEEproof}

\begin{remark}
If $R_{\mathrm{NG}}=0$, the rate condition in
Theorem~\ref{thm-Sufficiency-Observability} reduces to
\begin{equation*}
\mathcal{I}(\mathbf z\to\mathbf y)>R_{\mathrm{exp}}.
\end{equation*}
Thus, under Assumption~\ref{assump:cond_number}, the expansion rate is the
only asymptotic information rate threshold whenever the expected
divergence between the posterior and its moment-matched Gaussian grows sublinearly with time. This case does
not require the posterior itself to be Gaussian.
\end{remark}

\begin{theorem}[Sufficiency for Mean-Square Attractivity of the Closed-Loop State]\label{thm-Sufficiency-Stabilizability}
    Under Assumption~\ref{assump:cond_number}, suppose that the pair
$(\mathbf A, \mathbf B)$ is stabilizable\footnote{Mathematically, the pair $(\mathbf A, \mathbf B)$ is stabilizable if there exists a feedback gain matrix $\mathbf K$ such that the closed-loop matrix $\mathbf A + \mathbf B\mathbf K$ is Schur stable, i.e., all its eigenvalues satisfy $|\lambda_i(\mathbf A + \mathbf B\mathbf K)| < 1$.}. If
    \begin{equation*}
        \mathcal{I}(\mathbf z \to \mathbf y)
        > R_{\mathrm{exp}}+R_{\mathrm{NG}},
    \end{equation*}
    then there exists a certainty-equivalence control law that achieves
mean-square attractivity of the closed-loop state, i.e.,
    \begin{equation*}
        \lim_{t\to\infty}\mathbb{E}[\|\mathbf x_t\|^2]=0.
    \end{equation*}
\end{theorem}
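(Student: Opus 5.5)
We will combine Theorem~\ref{thm-Sufficiency-Observability} with the stabilizing gain $\mathbf K$ through a standard separation argument. Choose $\mathbf K$ so that $\mathbf A+\mathbf B\mathbf K$ is Schur. Let $\hat{\mathbf x}_t$ be a causal estimate built from $\mathbf y^{t-1}$, as in the predictive convention of Section~\ref{sec:system-model}. Concretely, $\hat{\mathbf x}_t:=\mathbf A\tilde{\mathbf x}_{t-1}+\mathbf B\mathbf u_{t-1}$, where $\tilde{\mathbf x}_{t-1}$ is assembled from the posterior mean $\hat{\mathbf z}_{t-1}$ of the unstable block and the propagated stable prediction. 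Apply the certainty-equivalence law $\mathbf u_t=\mathbf K\hat{\mathbf x}_t$. Since $\mathbf u_t$ is a function of $\mathbf y^{t-1}$, the entropy recursion \eqref{eq:entropy_recursion} and all steps in the proof of Theorem~\ref{thm-Sufficiency-Observability} remain valid under this feedback. The hypotheses, namely Assumption~\ref{assump:cond_number} and the rate condition, are stated for the closed-loop observation process, so that theorem yields $\mathbb E[\|\mathbf e_t\|^2]\to0$ with $\mathbf e_t=\mathbf x_t-\hat{\mathbf x}_t$. The predictive error inherits this convergence: its unstable part is $\mathbf A_u\boldsymbol\epsilon_{t-1}$, since the known input cancels, and its stable part still obeys $\boldsymbol\eta_{t+1}=\mathbf A_s\boldsymbol\eta_t$.

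Next, write the closed loop in the form
\begin{equation*}
\mathbf x_{t+1}=(\mathbf A+\mathbf B\mathbf K)\mathbf x_t-\mathbf B\mathbf K\mathbf e_t .
\end{equation*}
This is a Schur-stable linear system driven by an input whose second moment vanishes. Take $\mathbf Q\succ\mathbf 0$ solving the discrete Lyapunov equation for $\mathbf A_{\mathrm{cl}}:=\mathbf A+\mathbf B\mathbf K$. Alternatively, unroll the recursion directly:
\begin{equation*}
\mathbf x_t=\mathbf A_{\mathrm{cl}}^t\mathbf x_0-\sum_{k=0}^{t-1}\mathbf A_{\mathrm{cl}}^{t-1-k}\mathbf B\mathbf K\mathbf e_k .
\end{equation*}
We have $\|\mathbf A_{\mathrm{cl}}^j\|\le c\rho^j$ with $\rho<1$. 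Minkowski's inequality in $L^2$ then gives
\begin{equation*}
\sqrt{\mathbb E\|\mathbf x_t\|^2}\le c\rho^t\sqrt{\mathbb E\|\mathbf x_0\|^2}+c\|\mathbf B\mathbf K\|\sum_{k=0}^{t-1}\rho^{t-1-k}\sqrt{\mathbb E\|\mathbf e_k\|^2}.
\end{equation*}
The first term vanishes because $\mathbf x_0$ has a finite second moment. The second term is the convolution of a summable geometric kernel with a sequence tending to zero. That sequence is also bounded, since it converges and each $\mathbb E\|\mathbf e_k\|^2$ is finite by the trace bound $M$ and finite initial moments. A standard Toeplitz/dominated-convergence argument (split the sum at $k=t/2$) shows the convolution tends to zero. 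Hence $\mathbb E\|\mathbf x_t\|^2\to0$.

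The main obstacle is justifying that Theorem~\ref{thm-Sufficiency-Observability} applies in closed loop. The posterior law, $R_{\mathrm{NG}}$, and the directed information now depend on the feedback policy, and the estimation result used a generic causal input. I would argue that its proof only required $\mathbf u_k$ to be $\mathbf y^{k-1}$-measurable. That property holds for the certainty-equivalence law, and the theorem's hypotheses are read as conditions on the resulting closed-loop process. A secondary point is finiteness of $\mathbb E\|\mathbf e_k\|^2$ for every $k$, which follows from $\mathrm{Tr}(\mathbf P_k)\le M$ together with the deterministic stable-mode error recursion.
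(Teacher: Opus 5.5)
Your proof is correct and follows essentially the paper's route: invoke Theorem~\ref{thm-Sufficiency-Observability} for the closed-loop process, whose proof only needs $\mathbf u_k$ to be $\mathbf y^{k}$-measurable; apply certainty-equivalence feedback with a Schur gain; and conclude from a stable linear recursion driven by an error that vanishes in mean square. The differences are minor and arguably cleaner. Your full-state gain with a predictive estimate matches the paper's predictive convention and removes the separate stable-mode cascade, whereas the paper uses $\mathbf u_t=\mathbf K\hat{\mathbf z}_t$ with the filtered mean and a gain for $(\mathbf A_u,\mathbf B_u)$. Your Minkowski and geometric-convolution bound also makes explicit the step that the paper simply attributes to input-to-state stability.
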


\begin{IEEEproof}
    Theorem~\ref{thm-Sufficiency-Observability} applies under the stated
    rate and posterior covariance conditions. Consequently, there exists an estimation strategy providing an estimate $\hat{\mathbf z}_t= \mathbb{E}[\mathbf z_{t} | \mathbf y^{t}]$ such that the error in the unstable state vanishes asymptotically
    \begin{equation} \label{eq:error_convergence}
        \lim_{t \to \infty} \mathbb{E}[\|\boldsymbol\epsilon_t\|^2] = 0,
        \qquad
        \boldsymbol\epsilon_t := \mathbf z_t - \hat{\mathbf z}_t.
    \end{equation}

    Since the pair $(\mathbf A, \mathbf B)$ is stabilizable, the unstable subsystem $(\mathbf A_u, \mathbf B_u)$ is stabilizable. Thus, there exists a feedback gain matrix $\mathbf K$ such that $\mathbf A_{\mathrm{cl}} := \mathbf A_u + \mathbf B_u \mathbf K$ is Schur.
    We use the certainty-equivalence control law
    \begin{equation*}
        \mathbf u_t = \mathbf K\hat{\mathbf z}_t.
    \end{equation*}
    Substituting $\hat{\mathbf z}_t = \mathbf z_t - \boldsymbol\epsilon_t$ into the unstable dynamics gives
    \begin{align} \label{eq:closed_loop}
        \mathbf z_{t+1} &= \mathbf A_u \mathbf z_t + \mathbf B_u \mathbf u_t \notag \\
        &= \mathbf A_u \mathbf z_t + \mathbf B_u \mathbf K (\mathbf z_t - \boldsymbol\epsilon_t) \notag \\
        &= (\mathbf A_u + \mathbf B_u \mathbf K) \mathbf z_t - \mathbf B_u \mathbf K \boldsymbol\epsilon_t \notag \\
        &= \mathbf A_{\mathrm{cl}} \mathbf z_t + \mathbf d_t,
    \end{align}
    where
    $\mathbf d_t:=-\mathbf B_u\mathbf K\boldsymbol\epsilon_t$
    is the input induced by the estimation error.

    The closed-loop dynamics \eqref{eq:closed_loop} represent a stable linear system driven by the input $\mathbf d_t$. Since $\mathbf A_{\mathrm{cl}}$ is Schur stable, the system is Input-to-State Stable (ISS).

    From \eqref{eq:error_convergence}, the driving term converges to zero in the mean-square sense
    \begin{equation*}
        \lim_{t \to \infty} \mathbb{E}[\|\mathbf d_t\|^2]
        \le \|\mathbf B_u \mathbf K\|^2
        \lim_{t \to \infty} \mathbb{E}[\|\boldsymbol\epsilon_t\|^2]
        = 0.
    \end{equation*}
    For a linear system with a stable system matrix, if the input converges to zero, the state also converges to zero
    \begin{equation*}
        \lim_{t \to \infty} \mathbb{E}[\|\mathbf z_t\|^2] = 0.
    \end{equation*}
    It follows that $\mathbb{E}[\|\mathbf u_t\|^2]\to0$. The stable subsystem satisfies
    \begin{equation*}
        \mathbf s_{t+1}=\mathbf A_s\mathbf s_t+\mathbf B_s\mathbf u_t,
    \end{equation*}
    with $\mathbf A_s$ Schur. Hence $\mathbb{E}[\|\mathbf s_t\|^2]\to0$ as well. Since
    $\mathbf x_t=\mathbf T^{-1}[\mathbf z_t^T,\mathbf s_t^T]^T$, the full state satisfies
    \begin{equation*}
        \lim_{t\to\infty}\mathbb{E}[\|\mathbf x_t\|^2]=0.
    \end{equation*}

    This proves the attractivity condition.
\end{IEEEproof}

\begin{remark}
Theorems~\ref{thm-Sufficiency-Observability}
and~\ref{thm-Sufficiency-Stabilizability} establish the attractivity
components of Definitions~\ref{def:asymp_ms_obs}
and~\ref{def:asymp_ms_stab}. Let $\boldsymbol{\xi}_t$ denote either
the estimation error $\mathbf e_t$ or the closed-loop state $\mathbf x_t$.
If the corresponding estimator or controller design admits constants $r_\xi>0$ and a
continuous increasing function $\alpha_\xi$ with $\alpha_\xi(0)=0$ such
that
\begin{equation*}
\sup_{t\ge0}\mathbb E[\|\boldsymbol{\xi}_t\|^2]
\le
\alpha_\xi\!\left(
\mathbb E[\|\boldsymbol{\xi}_0\|^2]
\right)
\end{equation*}
whenever $\mathbb E[\|\boldsymbol{\xi}_0\|^2]\le r_\xi$, then the
associated stability component follows by choosing
$\delta\le r_\xi$ such that $\alpha_\xi(\delta)\le\epsilon$. Equivalently,
for every $\epsilon>0$, sufficiently small
$\mathbb E[\|\boldsymbol{\xi}_0\|^2]\le\delta$ guarantees
$\mathbb E[\|\boldsymbol{\xi}_t\|^2]\le\epsilon$ for every $t\ge0$.
Together with
$\lim_{t\to\infty}\mathbb E[\|\boldsymbol{\xi}_t\|^2]=0$, this yields
asymptotic mean-square observability or stabilizability, respectively. The
condition on directed information controls the attractivity that depends on
sensing, whereas the overshoot bound depends on the estimator or controller
realization.
\end{remark}

\subsection{A Lower Bound Based on Entropy Power for Nonlinear Sensing}
\label{subsec:epi_lower_bound}

The directed information rate in the sufficiency condition is generally
difficult to evaluate for nonlinear observations. When an
observation is obtained by applying a nonlinear function to the unstable
state and then adding independent noise, the entropy power inequality
\cite{Bobkov2017EPI} gives a lower bound
that depends only on the conditional entropy power of the noiseless sensor
output and the entropy power of the observation noise.

For a $p$-dimensional random vector $\mathbf a$ with finite differential
entropy, define its entropy power and conditional entropy power,
respectively, as
\begin{equation*}
    \mathrm{EP}(\mathbf a)
    :=\frac{1}{2\pi e}2^{\frac{2}{p}h(\mathbf a)},
    \qquad
    \mathrm{EP}(\mathbf a|\mathbf b)
    :=\frac{1}{2\pi e}2^{\frac{2}{p}h(\mathbf a|\mathbf b)}.
\end{equation*}

Specifically, consider observations generated according to
\begin{equation*}
    \mathbf y_t
    =
    g_t(\mathbf z_t)+\mathbf v_t,
    \qquad t\in\mathbb N_0,
\end{equation*}
where $g_t:\mathbb R^{n_u}\to\mathbb R^p$ is deterministic,
and denote the noiseless sensor output by
\begin{equation*}
    \mathbf r_t:=g_t(\mathbf z_t).
\end{equation*}
Suppose, for every $t$, that $\mathbf v_t$ is independent of
$(\mathbf z_t,\mathbf y^{t-1})$ and that $\mathbf v_t$ and
$\mathbf r_t|\mathbf y^{t-1}$ have absolutely continuous
distributions with finite differential entropies.
Whenever the following terms are well defined, define the associated
asymptotic lower bound as
\begin{equation*}
\label{eq:epi_lower_bound_rate_definition}
\begin{aligned}
\mathcal I_{\min}
:={}
\liminf_{T\to\infty}\frac{1}{T+1}
\sum_{t=0}^{T}
\frac{p}{2}
\log_2\!\left(
1+
\frac{
\mathrm{EP}(\mathbf r_t|\mathbf y^{t-1})
}{
\mathrm{EP}(\mathbf v_t)
}
\right).
\end{aligned}
\end{equation*}

\begin{proposition}[Lower Bound Based on Entropy Power]
\label{prop:epi_directed_info_lower_bound}
Under the preceding conditions,
\begin{equation}
\label{eq:epi_directed_info_lower_bound}
    \mathcal I(\mathbf z\to\mathbf y)
    \ge
    \mathcal I_{\min}.
\end{equation}
Moreover, if $\mathbf v_t\sim\mathcal N(\mathbf 0,\mathbf R_t)$ with
$\mathbf R_t\succ\mathbf 0$, then
\begin{equation*}
    \mathrm{EP}(\mathbf v_t)
    =
    (\det\mathbf R_t)^{1/p}.
\end{equation*}
\end{proposition}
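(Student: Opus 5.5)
We will prove \eqref{eq:epi_directed_info_lower_bound} one time step at a time and then average over time. For fixed $t$, the observation model gives $\mathbf y_t=\mathbf r_t+\mathbf v_t$, where $\mathbf v_t$ is independent of $(\mathbf z_t,\mathbf y^{t-1})$. Conditioning on $\mathbf z_t$ (together with $\mathbf y^{t-1}$) fixes $\mathbf r_t$, so by translation invariance
\begin{equation*}
I(\mathbf z_t;\mathbf y_t|\mathbf y^{t-1})
= h(\mathbf y_t|\mathbf y^{t-1}) - h(\mathbf y_t|\mathbf z_t,\mathbf y^{t-1})
= h(\mathbf r_t+\mathbf v_t|\mathbf y^{t-1}) - h(\mathbf v_t).
\end{equation*}
This is the same decomposition used in the proof of Proposition~\ref{prop:upper_bound_additive}. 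The difference is that we now bound the first term from below rather than from above.

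The key step is a conditional entropy power inequality. Fix a realization $\mathbf y^{t-1}=\mathbf a^{t-1}$. Under this conditioning, $\mathbf r_t$ and $\mathbf v_t$ are independent, both are absolutely continuous, and both have finite entropies. The EPI \cite{Bobkov2017EPI} therefore gives
\begin{equation*}
2^{\frac{2}{p}h(\mathbf y_t|\mathbf y^{t-1}=\mathbf a^{t-1})}\ge 2^{\frac{2}{p}h(\mathbf r_t|\mathbf y^{t-1}=\mathbf a^{t-1})}+2^{\frac{2}{p}h(\mathbf v_t)}.
\end{equation*}
The main obstacle is passing from this per-realization inequality to the averaged conditional entropies that appear in $\mathrm{EP}(\mathbf r_t|\mathbf y^{t-1})$. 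Our plan is to write the right-hand side as $f(h(\mathbf r_t|\mathbf y^{t-1}=\mathbf a^{t-1}))$, where
\begin{equation*}
f(x):=\tfrac{p}{2}\log_2\bigl(2^{2x/p}+2^{2h(\mathbf v_t)/p}\bigr).
\end{equation*}
Taking $\tfrac p2\log_2$ of both sides of the EPI yields $h(\mathbf y_t|\mathbf y^{t-1}=\mathbf a^{t-1})\ge f(h(\mathbf r_t|\mathbf y^{t-1}=\mathbf a^{t-1}))$. The function $f$ is a log-sum-exp of an affine function of $x$ and a constant, so it is convex. Averaging over $\mathbf y^{t-1}$ and applying Jensen's inequality then gives $h(\mathbf y_t|\mathbf y^{t-1})\ge f(h(\mathbf r_t|\mathbf y^{t-1}))$. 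This is exactly the conditional EPI with averaged conditional entropies. To justify the interchange of expectation we will use the finiteness assumptions, which make the expectations well defined.

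Subtracting $h(\mathbf v_t)$ and rewriting in terms of entropy powers gives
\begin{equation*}
I(\mathbf z_t;\mathbf y_t|\mathbf y^{t-1})\ge\tfrac p2\log_2\Bigl(1+\mathrm{EP}(\mathbf r_t|\mathbf y^{t-1})/\mathrm{EP}(\mathbf v_t)\Bigr).
\end{equation*}
We then sum from $t=0$ to $T$, use the definition \eqref{eq:directed_info_def} (with $\mathbf z$ in place of $\mathbf x$), divide by $T+1$, and take $\liminf$ on both sides. Since the termwise inequality is preserved, this yields $\mathcal I(\mathbf z\to\mathbf y)\ge\mathcal I_{\min}$.

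For the Gaussian case, $h(\mathbf v_t)=\tfrac12\log_2\det(2\pi e\mathbf R_t)$. Hence $2^{\frac2p h(\mathbf v_t)}=2\pi e(\det\mathbf R_t)^{1/p}$, and dividing by $2\pi e$ gives $\mathrm{EP}(\mathbf v_t)=(\det\mathbf R_t)^{1/p}$.
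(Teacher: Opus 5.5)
Your proposal is correct and follows essentially the same route as the paper: the per-history identity $I(\mathbf z_t;\mathbf y_t\mid\mathbf y^{t-1}=\mathbf a^{t-1})=h(\mathbf r_t+\mathbf v_t\mid\mathbf y^{t-1}=\mathbf a^{t-1})-h(\mathbf v_t)$, then the EPI for each fixed history, then Jensen's inequality with a convex log-sum-exp function to pass to the averaged conditional entropy, and finally summation and $\liminf$. The only difference is cosmetic: you apply Jensen to $f$ and subtract $h(\mathbf v_t)$ afterwards, whereas the paper applies it directly to $f_t=f-h(\mathbf v_t)$ on the per-history mutual-information bound, which is slightly cleaner for integrability because that bound is nonnegative.
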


\begin{IEEEproof}
See \apprefappendix{app:proof_epi_directed_info_lower_bound}{D}.
\end{IEEEproof}

For a specialization with bounds uniform over time, suppose additionally that there
exist constants $E_r>0$ and $E_v<\infty$ such that
\begin{equation}
\label{eq:uniform_entropy_power_bounds}
    \mathrm{EP}(\mathbf r_t|\mathbf y^{t-1})
    \ge E_r,
    \qquad
    \mathrm{EP}(\mathbf v_t)
    \le E_v,
    \quad \forall t.
\end{equation}
Then every term in
\eqref{eq:epi_lower_bound_rate_definition} has the uniform lower bound
\begin{equation*}
    \mathcal I(\mathbf z\to\mathbf y)
    \ge
    \mathcal I_{\min}
    \ge
    \frac p2\log_2\!\left(
        1+\frac{E_r}{E_v}
    \right),
\end{equation*}
whose right-hand side is independent of time.

Consequently, under Assumption~\ref{assump:cond_number}, the computable
condition
\begin{equation*}
    \mathcal I_{\min}
    >
    R_{\mathrm{exp}}+R_{\mathrm{NG}}
\end{equation*}
implies the conclusion of
Theorem~\ref{thm-Sufficiency-Observability}. If
$(\mathbf A,\mathbf B)$ is also stabilizable, it implies the
conclusion of Theorem~\ref{thm-Sufficiency-Stabilizability}. Under
\eqref{eq:uniform_entropy_power_bounds}, the following condition, which is independent of time,
\begin{equation*}
    \frac p2\log_2\!\left(
        1+\frac{E_r}{E_v}
    \right)
    >R_{\mathrm{exp}}+R_{\mathrm{NG}}
\end{equation*}
is therefore sufficient. The bound
reduces the evaluation of directed information to the conditional entropy power
of the noiseless sensor output; obtaining that entropy power remains
dependent on the model.

\subsection{Zero Posterior Non-Gaussianity Rate}

The general sufficiency results allow any finite $R_{\mathrm{NG}}$. This
subsection focuses on the important class $R_{\mathrm{NG}}=0$, for which the
threshold based on directed information reduces to $R_{\mathrm{exp}}$. We first show
that linear Gaussian observations have $R_{\mathrm{NG}}=0$, a
property preserved by invertible readouts, and directly verify
Assumption~\ref{assump:cond_number} in the scalar case. We then give a
criterion based on curvature for establishing $R_{\mathrm{NG}}=0$ without
explicitly computing the posterior.

\subsubsection{Linear Gaussian Observations}
\label{subsubsec:linear_gaussian_sensing}

Consider the linear Gaussian observation model given by
\begin{equation*}
    \mathbf y_k
    =
    \mathbf H_k\mathbf z_k+\mathbf v_k,
    \qquad
    \mathbf v_k\sim
    \mathcal N(\mathbf 0,\mathbf R_k),
\end{equation*}
where $\mathbf z_0$ is Gaussian,
$\mathbf R_k\succ\mathbf 0$, and the observation noises are
mutually independent and independent of the initial state. This is the
standard model underlying Kalman filtering
\cite{anderson2005optimal}.

For each fixed observation history, the corresponding control sequence is
known. The linear dynamics and Gaussian likelihood therefore preserve
Gaussianity under Bayesian updating, giving
\begin{equation*}
    p(\mathbf z_t|\mathbf y^{t})
    =
    \mathcal N\!\left(
        \hat{\mathbf z}_t,
        \mathbf P_t
    \right).
\end{equation*}
Hence, for every $t$ and almost every observation history,
\begin{equation*}
    p_t=q_t,
    \qquad
    D_t=0,
    \qquad
    R_{\mathrm{NG}}=0.
\end{equation*}

For the scalar specialization
\begin{equation*}
    y_k
    =
    h_k z_k+v_k,
    \qquad
    v_k\sim\mathcal N(0,\sigma^2),
\end{equation*}
suppose that $|h_k|\ge h_{\min}>0$. Its posterior variance is
deterministic. As derived in
\apprefappendix{app:scalar_gaussian_covariance_update}{B}, the Gaussian
update gives
\begin{align*}
    0<P_t
    &=
    \left(
        (P_t^-)^{-1}
        +\frac{h_t^2}{\sigma^2}
    \right)^{-1}
    \le\frac{\sigma^2}{h_{\min}^2},\\
    \frac{\lambda_{\max}(P_t)}
         {\lambda_{\min}(P_t)}
    &=1,
    \qquad
    \mathrm{Var}\!\left[
        \log_2P_t
    \right]=0.
\end{align*}
Since $\mathrm{Tr}(P_t)=P_t$,
Assumption~\ref{assump:cond_number} also holds.

\noindent\textit{Invertible readout extension.}
Let
\begin{equation*}
    \widetilde{\mathbf y}_k
    =
    \mathbf H_k\mathbf z_k+\mathbf v_k,
    \qquad
    \mathbf y_k
    =
    \varphi_k(\widetilde{\mathbf y}_k),
\end{equation*}
where $\varphi_k$ is a known invertible mapping. Since
$\mathbf y^{t}$ and $\widetilde{\mathbf y}^{t}$ determine each other,
\begin{equation*}
    p(\mathbf z_t|\mathbf y^{t})
    =
    p(\mathbf z_t|\widetilde{\mathbf y}^{t}).
\end{equation*}
Thus, the nonlinear readout preserves $R_{\mathrm{NG}}$ and all posterior
covariance conditions. In particular, it has $R_{\mathrm{NG}}=0$ and
inherits Assumption~\ref{assump:cond_number} whenever the latent
linear Gaussian model satisfies it. Calibrated camera response functions
over nonsaturated operating ranges provide one example
\cite{tai2013nonlinear}.

\subsubsection{Curvature Conditions for Zero Posterior Non-Gaussianity Rate}

The curvature conditions below are not assumptions of the general
sufficiency theorems. For a general nonlinear and non-Gaussian observation
model, directly evaluating $R_{\mathrm{NG}}$ requires evaluating the
posterior and its expected divergence from the moment-matched Gaussian,
which are typically unavailable in closed form. We therefore seek conditions stated directly in
terms of the observation likelihood and the initial prior. The conditions
below imply eventual posterior log-concavity, providing a verifiable route to
$R_{\mathrm{NG}}=0$. This route verifies only the property $R_{\mathrm{NG}}=0$;
Assumption~\ref{assump:cond_number} remains a separate covariance regularity
requirement. When both hold, the general threshold reduces to
$R_{\mathrm{exp}}$.

Following the observed information convention in~\cite{demaio2018observedFIM}, define
the local observed Fisher information matrix (OFIM) of the unstable
state at time $k$ as
\begin{equation*}
    \mathbf F_k
    :=
    -\nabla_{\mathbf z_k}^2
    \log p(\mathbf y_k|\mathbf z_k).
\end{equation*}
For $k\le t$, iterating the unstable dynamics from time $k$ to time $t$
gives
\begin{equation*}
    \mathbf z_t
    =
    \mathbf A_u^{t-k}\mathbf z_k
    +
    \sum_{\ell=k}^{t-1}
    \mathbf A_u^{t-1-\ell}
    \mathbf B_u\mathbf u_\ell.
\end{equation*}
Since $\mathbf A_u$ is invertible, the corresponding inverse dynamics is
\begin{equation*}
    \mathbf z_k
    =
    \mathbf A_u^{-(t-k)}\mathbf z_t
    +\mathbf c_{k,t},
\end{equation*}
where
\begin{equation*}
    \mathbf c_{k,t}
    :=
    -\sum_{\ell=k}^{t-1}
    \mathbf A_u^{-(\ell-k+1)}
    \mathbf B_u\mathbf u_\ell.
\end{equation*}
Conditioned on the realized
control history, $\mathbf c_{k,t}$ is fixed and does
not depend on the terminal state argument $\mathbf z_t$. For any integer
$L\in[1:t+1]$, define the
cumulative window OFIM in the coordinates of $\mathbf z_t$ by
\begin{equation*}
    \mathbf F_{t,L}
    :=
    \sum_{k=t-L+1}^{t}
    (\mathbf A_u^{-(t-k)})^T
    \mathbf F_k
    \mathbf A_u^{-(t-k)}.
\end{equation*}
For notational simplicity, the dependence of
$\mathbf F_k$ and
$\mathbf F_{t,L}$ on the relevant state
arguments, observation realizations, and realized control history is
suppressed. The cumulative window OFIM is left unnormalized because
division by the fixed window length $L$ would only rescale the constant
below.

\begin{assumption}
\label{assump:obs_hessian}
For the verification based on curvature, the observation process is
assumed to be conditionally memoryless with respect to the unstable
state, i.e.,
\begin{equation*}
    p\!\left(
        \mathbf y_k
        |
        \mathbf z^{k},\mathbf y^{k-1}
    \right)
    =
    p(\mathbf y_k|\mathbf z_k),
    \qquad k\in\mathbb{N}_0.
\end{equation*}
The corresponding local observation log-likelihoods are assumed to be
twice continuously differentiable with respect to the unstable state
argument. There exist an integer $L\ge1$ and constants $\alpha>0$
and $0\le C_H<\infty$ such that
\begin{equation*}
    \lambda_{\min}\!\left(
        \mathbf F_{t,L}
    \right)
    \ge \alpha,
    \qquad \forall t\ge L-1,
\end{equation*}
and
\begin{equation*}
    \lambda_{\min}\!\left(
        \mathbf F_k
    \right)
    \ge -C_H,
    \qquad \forall k\in[0\!:\!L-2].
\end{equation*}
The second condition is vacuous when $L=1$. Both inequalities hold for
all relevant arguments of the unstable state and almost every observation
realization.
\end{assumption}

Assumption~\ref{assump:obs_hessian} requires every length-$L$ window to
provide at least $\alpha$ units of observed information in every unstable
direction. The second bound prevents the finitely many terms in the
initial incomplete block from having arbitrarily negative observed
information.

\begin{assumption} \label{assump:prior_hessian}
The initial unstable state $\mathbf z_0$ has a strictly positive PDF
$p(\mathbf z_0)$ whose log-density is twice continuously
differentiable. There exists a scalar $\beta > 0$ such that for all
$\mathbf z_0 \in \mathbb{R}^{n_u}$
\begin{equation*}
    \mathbf H_0
    :=
    \nabla_{\mathbf z_0}^2 \log p(\mathbf z_0),
    \qquad
    \mathbf H_0\preceq \beta \mathbf I_{n_u}.
\end{equation*}
\end{assumption}

Only an upper Hessian bound is imposed, so the prior need not be
log-concave.

\begin{lemma}
\label{lemma:log_concavity}
Consider the unstable subsystem. Under
Assumptions~\ref{assump:obs_hessian}
and~\ref{assump:prior_hessian}, there exist a time $T_{lc}$ and a
constant $c>0$ such that, for every $t>T_{lc}$, the posterior density
$p_t$ satisfies, almost surely,
\begin{equation*}
    \nabla_{\mathbf z}^2
    \log p_t(\mathbf z)
    \preceq
    -c\mathbf I_{n_u},
    \qquad
    \forall\,\mathbf z\in\mathbb R^{n_u}.
\end{equation*}
\end{lemma}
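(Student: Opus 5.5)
The approach is to write the posterior $p_t$ explicitly, by Bayes' rule, in the coordinates of the terminal state $\mathbf z_t$, and then read off its Hessian. Fix an observation history $\mathbf y^t$; the realized control history is then fixed as well. Because the dynamics are noiseless and $\mathbf A_u$ is invertible, the map $\mathbf z_t\mapsto\mathbf z_k=\mathbf A_u^{-(t-k)}\mathbf z_t+\mathbf c_{k,t}$ is affine with a constant Jacobian. Conditional memorylessness (Assumption~\ref{assump:obs_hessian}) then gives, up to a normalizing constant independent of $\mathbf z_t$,
\begin{equation*}
p_t(\mathbf z)\propto p\bigl(\mathbf A_u^{-t}\mathbf z+\mathbf c_{0,t}\bigr)\prod_{k=0}^{t}p\bigl(\mathbf y_k\,|\,\mathbf A_u^{-(t-k)}\mathbf z+\mathbf c_{k,t}\bigr).
\end{equation*}
Here the Jacobian factor $|\det\mathbf A_u|^{-t}$ is absorbed into the constant. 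One point needs care: the controls $\mathbf u_\ell$ depend on $\mathbf y^{\ell-1}$ only. They therefore enter the likelihood only as known shifts, and they do not alter the product structure.

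Differentiating twice with the chain rule gives
\begin{equation*}
-\nabla^2\log p_t(\mathbf z)=-(\mathbf A_u^{-t})^T\mathbf H_0\mathbf A_u^{-t}+\sum_{k=0}^{t}(\mathbf A_u^{-(t-k)})^T\mathbf F_k\mathbf A_u^{-(t-k)}.
\end{equation*}
Each $\mathbf F_k$ is evaluated at the corresponding pulled-back argument. Next, partition $[0\!:\!t]$ into consecutive length-$L$ windows ending at $t, t-L, t-2L,\dots$, plus an initial incomplete block of at most $L-1$ indices. A window ending at $j$ contributes $(\mathbf A_u^{-(t-j)})^T\mathbf F_{j,L}\mathbf A_u^{-(t-j)}$. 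Since $\mathbf F_{j,L}\succeq\alpha\mathbf I$ for every $j\ge L-1$, the window sum is bounded below by
\begin{equation*}
\alpha\sum_{m\ge0}(\mathbf A_u^{-mL})^T\mathbf A_u^{-mL}\succeq\alpha\mathbf I,
\end{equation*}
where the bound $\succeq\alpha\mathbf I$ comes from the $m=0$ term alone. The prior term satisfies $-(\mathbf A_u^{-t})^T\mathbf H_0\mathbf A_u^{-t}\succeq-\beta\,\|\mathbf A_u^{-t}\|^2\mathbf I$. The incomplete block, made of indices $k\le L-2$, contributes at least $-C_H\sum_{k\le L-2}\|\mathbf A_u^{-(t-k)}\|^2\,\mathbf I$.

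The main obstacle is to make these negative terms at most $\alpha/2$ for large $t$ uniformly. This needs $\|\mathbf A_u^{-s}\|\to0$, which holds only if every eigenvalue of $\mathbf A_u$ satisfies $|\lambda|>1$ strictly. For eigenvalues on the unit circle, $\mathbf A_u^{-s}$ stays bounded (or grows polynomially with Jordan blocks), and the argument fails. So I would first try to absorb the residual terms into the window information instead. Taking only the final window, indices $t-L+1,\dots,t$, gives $\alpha\mathbf I$. Taking all earlier complete windows, their contribution is bounded below by $\alpha\sum_m(\mathbf A_u^{-mL})^T\mathbf A_u^{-mL}$. This sum dominates the prior and boundary terms, which involve $\mathbf A_u^{-(t-k)}$ with $t-k$ close to some $mL$, up to the constants $\beta$ and $C_H$ times $\max_{0\le r<L}\|\mathbf A_u^{-r}\|^2$. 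I expect the case of eigenvalues on the unit circle to need either strict instability or a comparison in which each bad term is dominated by a few adjacent window terms. If necessary I would state the case $|\lambda|>1$ explicitly.

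Under strict instability, choose $T_{lc}$ such that $\beta\|\mathbf A_u^{-t}\|^2+C_H(L-1)\max_{k\le L-2}\|\mathbf A_u^{-(t-k)}\|^2\le\alpha/2$ for all $t>T_{lc}$. Setting $c=\alpha/2$ then gives $\nabla^2\log p_t\preceq-c\mathbf I$ for every $\mathbf z$. Both $T_{lc}$ and $c$ are independent of the realization, since the assumptions hold for almost every observation history and all state arguments.
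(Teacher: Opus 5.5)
Your argument is sound when every eigenvalue of $\mathbf A_u$ satisfies $|\lambda|>1$. The posterior Hessian formula, the window decomposition, and the bounds on the prior and incomplete-block terms all match the paper. In that regime your operator-norm estimate $\|\mathbf A_u^{-s}\|\to0$ does the same job as the paper's Case~A.

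The gap is the marginally unstable case. The lemma is stated for $\mathbf A_u$ as defined in the paper's canonical decomposition, which collects \emph{all} eigenvalues with $|\lambda|\ge1$. Unit-modulus modes, e.g.\ $\mathbf A_u=\mathbf I$ or a Jordan block at $\lambda=1$, are therefore covered. You leave this case open and offer to restrict the statement, which would prove a strictly weaker lemma. Your tentative fix does not close it either. Bounding each prior or boundary term by a \emph{constant} (built from $\beta$, $C_H$, $\max_r\|\mathbf A_u^{-r}\|^2$) times a neighbouring window term gives nothing unless that constant is below $\alpha$. In the marginal case those window terms do not decay.

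The missing idea is to compare the bad terms against the \emph{whole} accumulated window information $\sum_{j=0}^{N_t-1}(\mathbf A_u^{-jL})^T\mathbf A_u^{-jL}$, with $N_t=\lfloor (t+1)/L\rfloor$, not just its $j=0$ term. The paper writes $\mathbf A_u^{-1}=\mathbf V\mathbf J\mathbf V^{-1}$ with $\mathbf J=\mathrm{diag}(\mathbf J_{<1},\mathbf J_{=1})$. After congruence by $\mathbf V$, all bounding matrices are block diagonal, so the two blocks can be treated separately. On $\mathbf J_{<1}$ your argument applies verbatim. On $\mathbf J_{=1}$, the prior and remainder terms grow like $t^{2K-2}$, where $K$ is the largest Jordan block size. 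The accumulated negative term grows like $t^{2K-1}$, so the largest eigenvalue of that block tends to $-\infty$.

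To make this uniform over unit directions $\mathbf v$, which the paper only sketches, note that $s\mapsto\|\mathbf J_{=1}^{s}\mathbf v\|^2$ is a nonnegative polynomial of degree at most $2K-2$. This holds because the rotation factors in the real Jordan blocks are orthogonal and commute with the nilpotent part. Equivalence of norms on this finite-dimensional polynomial space then gives
\begin{equation*}
\|\mathbf J_{=1}^{t-k}\mathbf v\|^2\le \frac{C}{N_t}\sum_{j<N_t}\|\mathbf J_{=1}^{jL}\mathbf v\|^2
\end{equation*}
for all $k\in[0\!:\!L-2]$ (with $k=0$ being the prior term) and all large $N_t$, uniformly in $\mathbf v$. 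The sum is at least $\|\mathbf v\|^2$ because of its $j=0$ term. This is exactly the ``each bad term is dominated by a vanishing fraction of the window sum'' comparison you were looking for, and it yields negativity on that block uniformly.
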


\begin{IEEEproof}
    See \apprefappendix{app:proof_lemma_log_concavity}{C}.
\end{IEEEproof}

\begin{lemma}[Entropy Gap for a Log-Concave Posterior]
\label{lemma:entropy_gap}
For any $t$ and fixed observation history $\mathbf y^{t}$, suppose
that $p_t$ is an absolutely continuous log-concave density
on $\mathbb R^{n_u}$. Let
$q_t$ be its moment-matched Gaussian density. Then
\begin{equation*}
0
\le
D_t
=
h\!\left(q_t\right)
-h\!\left(p_t\right)
\le
n_u C_{n_u},
\end{equation*}
where $C_{n_u}<\infty$ depends only on the dimension $n_u$.
\end{lemma}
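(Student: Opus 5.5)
The argument has three steps: rewrite the divergence as an entropy gap, show the gap is nonnegative, and bound it uniformly for log-concave densities using a dimensional constant.

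\emph{Step 1: the identity $D_t=h(q_t)-h(p_t)$.} Fix the observation history, so $p_t$ is a deterministic density with mean $\hat{\mathbf z}_t$ and covariance $\mathbf P_t$. Log-concave densities have moments of all orders, and $\mathbf P_t\succ\mathbf 0$ because $p_t$ is absolutely continuous. Since $\log_2 q_t(\boldsymbol\xi)$ is a quadratic polynomial in $\boldsymbol\xi$, the cross term $-\int p_t\log_2 q_t$ depends only on the first two moments of $p_t$. These moments coincide with those of $q_t$, so the cross term equals $h(q_t)$. Hence
\begin{equation*}
D_t=-h(p_t)-\int p_t\log_2 q_t=h(q_t)-h(p_t).
\end{equation*}
Nonnegativity follows from $D_{\mathrm{KL}}\ge0$, or equivalently from the Gaussian maximum entropy property already used in Theorem~\ref{thm-Necessary-Observability}.

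\emph{Step 2: reduction to isotropic position.} Both $h(q_t)-h(p_t)$ and the KL divergence are invariant under affine bijections, since the Jacobian terms cancel. Setting $\mathbf w=\mathbf P_t^{-1/2}(\mathbf z-\hat{\mathbf z}_t)$, we may therefore assume $p_t$ is isotropic: mean zero, covariance $\mathbf I_{n_u}$. Log-concavity is preserved under this map. It then suffices to show $h(p)\ge \frac{n_u}{2}\log_2(2\pi e)-n_uC_{n_u}$ for every isotropic log-concave $p$ on $\mathbb R^{n_u}$.

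\emph{Step 3: the uniform entropy lower bound.} This is the main step. The plan is to use the standard bound $\|p\|_\infty\le (C\,L_{n_u})^{n_u}$ on the sup-norm of an isotropic log-concave density, where $L_{n_u}$ is the isotropic constant. A crude dimension-dependent bound suffices here; Bourgain's $L_n\lesssim n^{1/4}\log n$ is more than enough, and even a trivial $L_n\le c\sqrt n$ works, because we only need a constant depending on $n_u$. Then
\begin{equation*}
h(p)=-\int p\log_2 p\ge-\log_2\|p\|_\infty\ge -n_u\log_2(C L_{n_u}),
\end{equation*}
and therefore
\begin{equation*}
D_t\le n_u\Bigl[\tfrac12\log_2(2\pi e)+\log_2(CL_{n_u})\Bigr]=:n_uC_{n_u}.
\end{equation*}
A cleaner alternative is the known sharp inequality $h(p)\ge h(p_{\max})-$(something), or the bound $h(p)\ge \log_2 \|p\|_\infty^{-1}$ combined with $\|p\|_\infty\le e^{n}p(\text{mean})$ (Fradelizi). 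For isotropic $p$, the value $p(\text{mean})$ is bounded by a dimensional constant through the standard volume argument on the level set $\{p\ge p(\text{mean})e^{-n}\}$. This route is self-contained: that level set is convex, carries mass at least a dimensional constant, and has covariance bounded by a multiple of the identity. Hence its volume is bounded below, which yields the upper bound on $\|p\|_\infty$.

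The delicate part is precisely this sup-norm bound. Everything else is algebraic. The key point to verify is that the constant obtained is independent of the particular posterior and history, which holds because the argument uses only isotropy and log-concavity.
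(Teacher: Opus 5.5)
Your proposal is correct and follows essentially the paper's argument. The paper's proof only cites Bobkov--Madiman, and their reasoning is your chain: the moment-matching identity $D_t=h(q_t)-h(p_t)$, affine reduction to isotropic position, the pointwise bound $h(p)\ge-\log_2\|p\|_\infty$, and the isotropic-constant bound $\|p\|_\infty\le(CL_{n_u})^{n_u}$.

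There is one slip in your optional ``self-contained'' alternative, though your main route does not depend on it. A covariance \emph{upper} bound on the level set cannot give a \emph{lower} bound on its volume. What you need instead is that the level set is thick in every direction. This follows from isotropy: each one-dimensional marginal of an isotropic log-concave density is bounded by $1$, so a slab of width $w$ carries mass at most $w$. Since the level set has mass at least a dimensional constant, its width is bounded below in every direction. A convex set of minimal width $w$ contains a ball of radius of order $w/\sqrt{n_u}$, which gives the volume bound.
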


\begin{IEEEproof}
The bounds on the entropy gap for
log-concave densities follow from \cite{Bobkov2011,Bobkov2012} and
\cite[Corollary 3]{Bobkov2010ISIT}.
\end{IEEEproof}

\begin{proposition}[Curvature Conditions for $R_{\mathrm{NG}}=0$]
\label{prop:curvature_implies_entropy_deficit}
Under Assumptions~\ref{assump:obs_hessian}
and~\ref{assump:prior_hessian},
the posterior non-Gaussianity rate satisfies
\begin{equation*}
    R_{\mathrm{NG}}=0.
\end{equation*}
\end{proposition}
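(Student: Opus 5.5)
The plan is to combine Lemma~\ref{lemma:log_concavity} with Lemma~\ref{lemma:entropy_gap}. This shows that $\mathbb E[D_t]$ is eventually bounded by a constant independent of $t$, so that dividing by $t+1$ drives the ratio to zero.

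First, invoke Lemma~\ref{lemma:log_concavity}. Under Assumptions~\ref{assump:obs_hessian} and~\ref{assump:prior_hessian}, there exist $T_{lc}$ and $c>0$ such that, for every $t>T_{lc}$ and almost every observation history $\mathbf y^t$,
\begin{equation*}
\nabla^2_{\mathbf z}\log p_t\preceq -c\mathbf I_{n_u}.
\end{equation*}
Thus $-\log p_t$ is strongly convex and $p_t$ is log-concave; in fact it is strongly log-concave. In particular, $p_t$ is absolutely continuous with finite second moments, and its covariance $\mathbf P_t$ satisfies $\mathbf 0\prec\mathbf P_t\preceq c^{-1}\mathbf I_{n_u}$ by the Brascamp--Lieb inequality. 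Hence the moment-matched Gaussian $q_t$ is well defined and $D_t$ is finite.

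Second, apply Lemma~\ref{lemma:entropy_gap} pathwise. For $t>T_{lc}$ and almost every $\mathbf y^t$, we get $0\le D_t\le n_uC_{n_u}$, where the constant depends only on $n_u$. Taking expectations over $\mathbf y^t$ gives $0\le\mathbb E[D_t]\le n_uC_{n_u}$ for all $t>T_{lc}$. The null set of histories on which the Hessian bound might fail does not affect the expectation, and measurability of $D_t$ in $\mathbf y^t$ is taken as given, as in Definition~\ref{def:posterior_non_gaussianity_rate}.

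Finally, conclude that
\begin{equation*}
0\le R_{\mathrm{NG}}=\limsup_{t\to\infty}\frac{\mathbb E[D_t]}{t+1}\le\limsup_{t\to\infty}\frac{n_uC_{n_u}}{t+1}=0.
\end{equation*}
The finitely many times $t\le T_{lc}$ are irrelevant to the $\limsup$, even if $\mathbb E[D_t]$ were infinite there. The substantive work lies entirely in Lemma~\ref{lemma:log_concavity}, which is already established. The main remaining obstacle is therefore technical: we must ensure that the almost-sure, history-uniform constants from that lemma (with $T_{lc}$ and $c$ independent of $\mathbf y^t$) transfer to a bound on the expectation. This holds because both $T_{lc}$ and the dimensional constant $C_{n_u}$ are deterministic.
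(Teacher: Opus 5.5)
Your proof is correct and follows the paper's argument. You invoke Lemma~\ref{lemma:log_concavity} for almost-sure log-concavity of $p_t$ when $t>T_{lc}$, apply Lemma~\ref{lemma:entropy_gap} to get the deterministic bound $0\le D_t\le n_uC_{n_u}$, then take expectations and divide by $t+1$. Your additional remarks are valid refinements that the paper leaves implicit: the Brascamp--Lieb bound showing that $q_t$ is well defined, and the fact that the finitely many times $t\le T_{lc}$ do not affect the $\limsup$.
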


\begin{IEEEproof}
By Lemma~\ref{lemma:log_concavity}, there exists a finite time
$T_{lc}$ such that $p_t$ is log-concave almost surely for
every $t>T_{lc}$. Applying Lemma~\ref{lemma:entropy_gap} to the posterior
and its moment-matched Gaussian gives
\begin{equation}
\label{eq:uniform_posterior_gaussian_divergence}
0
\le
D_t
\le n_u C_{n_u}
\end{equation}
for every $t>T_{lc}$, almost surely.
Taking expectation, dividing by $t+1$, and using
\eqref{eq:uniform_posterior_gaussian_divergence} yields
\begin{align*}
0
&\le
\frac{1}{t+1}
\mathbb{E}[D_t]
\le
\frac{n_u C_{n_u}}{t+1}.
\end{align*}
The right-hand side tends to zero, so
Definition~\ref{def:posterior_non_gaussianity_rate} gives
$R_{\mathrm{NG}}=0$.
\end{IEEEproof}

\begin{corollary}[Sufficiency under Curvature Conditions]
\label{cor:curvature_based_sufficiency}
Under Assumptions~\ref{assump:cond_number},
\ref{assump:obs_hessian}, and~\ref{assump:prior_hessian}, if
$\mathcal{I}(\mathbf z\to\mathbf y)>R_{\mathrm{exp}}$, then the
conclusion of Theorem~\ref{thm-Sufficiency-Observability} holds. If, in
addition, $(\mathbf A,\mathbf B)$ is stabilizable, then the
conclusion of Theorem~\ref{thm-Sufficiency-Stabilizability} also holds.
\end{corollary}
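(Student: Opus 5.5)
The corollary follows by composing Proposition~\ref{prop:curvature_implies_entropy_deficit} with the two general sufficiency theorems. First, invoke Proposition~\ref{prop:curvature_implies_entropy_deficit}. Under Assumptions~\ref{assump:obs_hessian} and~\ref{assump:prior_hessian}, it gives $R_{\mathrm{NG}}=0$. In particular $R_{\mathrm{NG}}$ is finite, so the strict rate condition of Theorem~\ref{thm-Sufficiency-Observability} is meaningful. That condition, $\mathcal I(\mathbf z\to\mathbf y)>R_{\mathrm{exp}}+R_{\mathrm{NG}}$, then reduces exactly to the hypothesis $\mathcal I(\mathbf z\to\mathbf y)>R_{\mathrm{exp}}$, and the margin $\delta=\mathcal I(\mathbf z\to\mathbf y)-R_{\mathrm{exp}}$ is strictly positive.

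Second, note that Assumption~\ref{assump:cond_number} is assumed separately. As the text before Lemma~\ref{lemma:log_concavity} stresses, the curvature route does not supply it. All hypotheses of Theorem~\ref{thm-Sufficiency-Observability} are therefore met, and the posterior-mean estimator for the unstable block, combined with open-loop propagation of the stable prediction, gives $\mathbb E[\|\mathbf e_t\|^2]\to0$. If $(\mathbf A,\mathbf B)$ is also stabilizable, Theorem~\ref{thm-Sufficiency-Stabilizability} applies with the same rate condition. The certainty-equivalence law $\mathbf u_t=\mathbf K\hat{\mathbf z}_t$ then yields $\mathbb E[\|\mathbf x_t\|^2]\to0$.

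The only delicate point is consistency of the setting, and I would check two things. First, Assumption~\ref{assump:obs_hessian} uses a likelihood $p(\mathbf y_k|\mathbf z_k)$ in the unstable coordinates, which is compatible with the memoryless model used in the sufficiency theorems. Second, under the closed-loop controller the posterior $p_t$ is still computed conditionally on the realized control history. Lemma~\ref{lemma:log_concavity} already handles this, since $\mathbf c_{k,t}$ is fixed given $\mathbf y^{t}$. So log-concavity, and hence $R_{\mathrm{NG}}=0$, holds in both the estimation and the closed-loop scenarios. With that noted, no new estimates are needed.
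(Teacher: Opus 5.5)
Your proposal is correct and matches the paper's argument, which the paper leaves implicit. Proposition~\ref{prop:curvature_implies_entropy_deficit} gives $R_{\mathrm{NG}}=0$, so with Assumption~\ref{assump:cond_number} imposed separately, the rate condition of Theorems~\ref{thm-Sufficiency-Observability} and~\ref{thm-Sufficiency-Stabilizability} reduces to $\mathcal I(\mathbf z\to\mathbf y)>R_{\mathrm{exp}}$, and your check that Lemma~\ref{lemma:log_concavity} handles feedback through the fixed offsets $\mathbf c_{k,t}$ agrees with the paper's appendix proof.
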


\section{Conclusion} \label{sec:conclusion}

In this paper, we considered estimation and feedback control of an unstable linear system whose measurements are produced by a prescribed, possibly nonlinear and non-Gaussian, observation law.
Classical data-rate results place the information bottleneck after measurement, at the communication channel. 
In contrast, our formulation places it at the sensing stage and asks whether the observation mechanism itself preserves enough sensing information about unstable modes.
Using directed information, we showed that the open-loop expansion rate remains the fundamental information demand for mean-square observability and stabilizability, including under additive process disturbances.
We further translated this information requirement into practical criteria: an upper bound certifies infeasibility, while a lower bound certifies sufficiency under posterior covariance regularity. 
For linear Gaussian sensing, the information rate is characterized exactly by the steady-state Riccati equation.
For general posteriors, our analysis showed that posterior non-Gaussianity can prevent a reduction in posterior entropy from translating directly into mean-square convergence.
This effect is quantified by the posterior non-Gaussianity rate, which vanishes under verifiable curvature conditions.
Overall, the results separate the sensing requirement into the information needed to counter unstable dynamics and the extra cost caused by posterior geometry, providing a unified basis for judging whether a given observation mechanism is adequate for estimation and control.

\ifarxivversion
\appendices
\section{Proof of Theorem~\ref{thm-Necessary-Stabilizability}} \label{app:proof_thm_stab}

\begin{IEEEproof}
The proof follows a similar logic to Theorem~\ref{thm-Necessary-Observability}, utilizing the entropy balance equation derived in \eqref{eq:rate_balance_final}.
As established in the proof of Theorem~\ref{thm-Necessary-Observability}, the entropy of the unstable subsystem satisfies the balance equation
\begin{equation}
    \frac{1}{t+1} I(\mathbf z^{t} \to \mathbf y^{t}) = R_{\mathrm{exp}} + \frac{h_0}{t+1} - \frac{h_{t+1}}{t+1},
    \label{eq:rate_balance_stab_app}
\end{equation}
where $h_{t+1} := h(\mathbf z_{t+1} | \mathbf y^{t})$ is the terminal conditional differential entropy.

We focus on showing that
\begin{equation*}
    \limsup_{t\to\infty}\frac{h_{t+1}}{t+1}\le 0
\end{equation*}
under the stabilizability hypothesis.
By mean-square stabilizability,
\begin{equation*}
    \limsup_{t\to\infty}
    \mathbb E[\|\mathbf x_t\|^2]<\infty.
\end{equation*}
Hence, there exist constants $C_x<\infty$ and $t_0$ such that
$\mathbb E[\|\mathbf x_t\|^2]\le C_x$ for all $t\ge t_0$.
Because $\mathbf z_t$ is the unstable block of $\mathbf T\mathbf x_t$,
for all sufficiently large $t$,
\begin{equation*}
    \mathbb E[\|\mathbf z_{t+1}\|^2]
    \le
    \|\mathbf T\|^2
    \mathbb E[\|\mathbf x_{t+1}\|^2]
    \le
    \|\mathbf T\|^2 C_x
    =: C_z.
\end{equation*}
Let
\begin{equation*}
    \boldsymbol\Sigma_{z,t+1}
    :=
    \operatorname{Cov}(\mathbf z_{t+1}).
\end{equation*}
Then
\begin{equation*}
    \operatorname{Tr}(\boldsymbol\Sigma_{z,t+1})
    =
    \mathbb E\!\left[
        \|\mathbf z_{t+1}
        -\mathbb E[\mathbf z_{t+1}]\|^2
    \right]
    \le
    \mathbb E[\|\mathbf z_{t+1}\|^2]
    \le C_z.
\end{equation*}
Therefore,
\begin{align*}
    h_{t+1}
    &= h(\mathbf z_{t+1}\mid\mathbf y^t)\\
    &\le h(\mathbf z_{t+1})\\
    &\le
    \frac{n_u}{2}
    \log_2\!\left(
        \frac{2\pi e}{n_u}
        \operatorname{Tr}(\boldsymbol\Sigma_{z,t+1})
    \right)\\
    &\le
    \frac{n_u}{2}
    \log_2\!\left(
        \frac{2\pi e}{n_u}C_z
    \right)
    =:H_{\mathrm{bound}}
\end{align*}
for all sufficiently large $t$.
Consequently,
\begin{equation*}
    \limsup_{t\to\infty}
    \frac{h_{t+1}}{t+1}
    \le 0.
\end{equation*}

Since the initial entropy $h_0$ is finite, taking the limit inferior of \eqref{eq:rate_balance_stab_app} gives
\begin{equation*}
    \liminf_{t \to \infty} \frac{1}{t+1} I(\mathbf z^{t} \to \mathbf y^{t})
    \ge R_{\mathrm{exp}}.
\end{equation*}
This completes the proof.
\end{IEEEproof}

\section{Derivation of the Scalar Gaussian Posterior Variance}
\label{app:scalar_gaussian_covariance_update}

Conditioned on the past observation history, the scalar prediction density satisfies
\begin{equation*}
    z_t\mid\mathbf y^{t-1}
    \sim
    \mathcal N\!\left(
        \hat z_t^-,
        P_t^-
    \right).
\end{equation*}
Hence,
\begin{align*}
p(z_t\mid\mathbf y^{t-1})=
\frac{1}{\sqrt{2\pi P_t^-}}
\exp\!\left[
-\frac{(z_t-\hat z_t^-)^2}
      {2P_t^-}
\right].
\end{align*}
For the observation
\begin{equation*}
    y_t=h_t z_t+v_t,
    \qquad
    v_t\sim\mathcal N(0,\sigma^2),
\end{equation*}
the likelihood is
\begin{equation*}
    p(y_t\mid z_t)
    =
    \frac{1}{\sqrt{2\pi\sigma^2}}
    \exp\!\left[
        -\frac{(y_t-h_t z_t)^2}{2\sigma^2}
    \right].
\end{equation*}
Bayes' rule therefore gives
\begin{align*}
p(z_t\mid\mathbf y^{t})=
\frac{
    p(y_t\mid z_t)
    p(z_t\mid\mathbf y^{t-1})
}{
    p(y_t\mid\mathbf y^{t-1})
}.
\end{align*}
Substituting the prediction density and the likelihood into Bayes' rule gives
\begin{align*}
p(z_t\mid\mathbf y^{t})\propto
\exp\!\left\{
-\frac12\left[
\frac{(z_t-\hat z_t^-)^2}
     {P_t^-}
+\frac{(y_t-h_t z_t)^2}{\sigma^2}
\right]
\right\},
\end{align*}
where the last proportionality retains all terms that depend on $z_t$.
Define
\begin{align*}
A_t
&:=
(P_t^-)^{-1}
+\frac{h_t^2}{\sigma^2},\\
B_t
&:=
\frac{\hat z_t^-}
     {P_t^-}
+\frac{h_t y_t}{\sigma^2},\\
C_t
&:=
\frac{(\hat z_t^-)^2}
     {P_t^-}
+\frac{y_t^2}{\sigma^2}.
\end{align*}
Expanding both squared terms gives
\begin{align*}
\frac{(z_t-\hat z_t^-)^2}
      {P_t^-}
+\frac{(y_t-h_t z_t)^2}{\sigma^2}
&=
A_tz_t^2-2B_tz_t+C_t\\
&=
A_t\left(z_t-\frac{B_t}{A_t}\right)^2
+C_t-\frac{B_t^2}{A_t}.
\end{align*}
The last two terms not involving $z_t$ are absorbed into the
normalizing constant. The posterior is therefore
\begin{equation*}
    z_t\mid\mathbf y^{t}
    \sim
    \mathcal N\!\left(
        \frac{B_t}{A_t},
        A_t^{-1}
    \right).
\end{equation*}
In particular, its variance is
\begin{equation*}
    P_t
    = (A_t)^{-1}
    =
    \left(
        (P_t^-)^{-1}
        +\frac{h_t^2}{\sigma^2}
    \right)^{-1}.
\end{equation*}
Multiplying the numerator and denominator by
$\sigma^2P_t^-$ gives the equivalent covariance update
\begin{equation*}
    P_t
    =
    \frac{
        \sigma^2P_t^-
    }{
        \sigma^2+h_t^2P_t^-
    }.
\end{equation*}
Since
\begin{equation*}
    (P_t^-)^{-1}
    +\frac{h_t^2}{\sigma^2}
    \ge
    \frac{h_t^2}{\sigma^2}>0,
\end{equation*}
the posterior variance satisfies
\begin{align*}
    0<P_t
    \le
    \left(\frac{h_t^2}{\sigma^2}\right)^{-1}
    =
    \frac{\sigma^2}{h_t^2}
    \le
    \frac{\sigma^2}{h_{\min}^2}.
\end{align*}

\section{Proof of Lemma~\ref{lemma:log_concavity}}\label{app:proof_lemma_log_concavity}
\begin{IEEEproof}
We first derive the posterior density in a form that explicitly accounts
for causal feedback. Fix an observation history $\mathbf y^{t}$.
Because the control policy is causal, the
corresponding control realization $\mathbf u^{t-1}$ is also fixed.
Since the unstable dynamics are noiseless and $\mathbf A_u$ is
invertible, every terminal state argument $\mathbf z_t$ determines
the earlier state arguments as
\begin{equation*}
    \mathbf z_k
    =
    \mathbf A_u^{-(t-k)}\mathbf z_t
    +\mathbf c_{k,t},
    \qquad k\le t,
\end{equation*}
and the Jacobian of this mapping with respect to $\mathbf z_t$ is
$\mathbf A_u^{-(t-k)}$.

By the probability chain rule and the conditional memoryless property in
Assumption~\ref{assump:obs_hessian},
\begin{align*}
p_{\mathbf y^{t}|\mathbf z_0}
(\mathbf y^{t}|\mathbf z_0)=
\prod_{k=0}^{t}
p_{\mathbf y_k|\mathbf z_k}
(\mathbf y_k|\mathbf z_k).
\end{align*}
For the fixed observation history, the change of variables
$\mathbf z_0=\mathbf A_u^{-t}\mathbf z_t
+\mathbf c_{0,t}$ therefore gives
\begin{align*}
p_t(\mathbf z_t)=
\frac{|\det(\mathbf A_u^{-t})|}
     {p_{\mathbf y^{t}}(\mathbf y^{t})}
p_{\mathbf z_0}\!\left(
    \mathbf A_u^{-t}\mathbf z_t
    +\mathbf c_{0,t}
\right)
\prod_{k=0}^{t}
p_{\mathbf y_k|\mathbf z_k}
(\mathbf y_k|\mathbf z_k).
\end{align*}
Taking logarithms, all terms contributed by the Jacobian and the
normalizing density $p_{\mathbf y^{t}}(\mathbf y^{t})$ are
constant with respect to $\mathbf z_t$. Recall the prior Hessian
\begin{equation*}
    \mathbf H_0
    :=
    \nabla_{\mathbf z_0}^2
    \log p_{\mathbf z_0}(\mathbf z_0).
\end{equation*}
Applying the Hessian chain rule to the transformed prior and each local
observation log-likelihood gives the total posterior Hessian
$\mathbf H_t:=\nabla_{\mathbf z_t}^2
\log p_t(\mathbf z_t)$ as
\begin{align*}
    \mathbf H_t
    &= (\mathbf A_u^{-t})^T \mathbf H_0 (\mathbf A_u^{-t})
    - \sum_{k=0}^t (\mathbf A_u^{-(t-k)})^T \mathbf F_k (\mathbf A_u^{-(t-k)}) \\
    &\stackrel{(a)}{=} (\mathbf A_u^{-t})^T \mathbf H_0 (\mathbf A_u^{-t}) \\
    &\quad - \sum_{j=0}^{N_t-1} \left( \sum_{k \in \mathcal{T}_j} (\mathbf A_u^{-(t-k)})^T \mathbf F_k (\mathbf A_u^{-(t-k)}) \right) + \mathbf R_t \\
    &\stackrel{(b)}{\preceq} \beta (\mathbf A_u^{-t})^T (\mathbf A_u^{-t}) - \alpha \sum_{j=0}^{N_t-1} (\mathbf A_u^{-jL})^T (\mathbf A_u^{-jL}) + \mathbf R_t,
\end{align*}
where
\begin{itemize}
    \item[(a)] We partition the time horizon $[0\!:\!t]$ into
    $N_t = \lfloor (t+1)/L \rfloor$ complete blocks of length $L$,
    indexed by $j \in [0:N_t-1]$, where
    \begin{equation*}
        \mathcal{T}_j
        :=
        [t-(j+1)L+1:t-jL].
    \end{equation*}
    The remaining initial indices form
    $\mathcal{T}_{\mathrm{rem}}=[0:t-N_tL]$, which is understood to be
    empty when $t-N_tL=-1$. The corresponding remainder term is
    \begin{equation*}
        \mathbf R_t := -\sum_{k=0}^{t - N_t L} (\mathbf A_u^{-(t-k)})^T \mathbf F_k (\mathbf A_u^{-(t-k)}).
    \end{equation*}
    \item[(b)] Assumption~\ref{assump:prior_hessian} gives
    $\mathbf H_0\preceq\beta\mathbf I_{n_u}$. Moreover,
    by the definition of the cumulative window OFIM, each complete block
    satisfies
    \begin{align*}
    &-\sum_{k\in\mathcal{T}_j}
    (\mathbf A_u^{-(t-k)})^T
    \mathbf F_k
    \mathbf A_u^{-(t-k)}
    \\
    &\quad=
    -(\mathbf A_u^{-jL})^T
    \mathbf F_{t-jL,L}
    \mathbf A_u^{-jL}
    \\
    &\quad\preceq
    -\alpha
    (\mathbf A_u^{-jL})^T
    \mathbf A_u^{-jL},
    \end{align*}
    where the last inequality follows from
    Assumption~\ref{assump:obs_hessian}.
\end{itemize}

To analyze the asymptotic spectral behavior, we use the real Jordan
decomposition $\mathbf A_u^{-1} = \mathbf V \mathbf J \mathbf V^{-1}$, where the Jordan matrix $\mathbf J$ has the block form
\begin{equation*}
    \mathbf J = \begin{bmatrix} \mathbf J_{<1} & \mathbf 0 \\ \mathbf 0 & \mathbf J_{=1} \end{bmatrix}.
\end{equation*}
Here, $\mathbf J_{<1}$ corresponds to the eigenvalues with magnitude strictly less than 1, and $\mathbf J_{=1}$ corresponds to the eigenvalues with magnitude equal to 1. Let $n_{<1}:=\dim(\mathcal S_{<1})$ denote the dimension of the strictly unstable subspace.

First, we bound the remainder terms. The second OFIM bound in
Assumption~\ref{assump:obs_hessian} gives
$\mathbf F_k\succeq-C_H\mathbf I_{n_u}$ and hence
\begin{equation*}
    -\mathbf V^T \mathbf F_k \mathbf V
    \preceq C_H\mathbf V^T\mathbf V
    \preceq C_{\text{rem}}\mathbf I_{n_u},
\end{equation*}
where $C_{\text{rem}}:=C_H\sigma_{\max}^2(\mathbf V)$.
The identity
$\mathbf A_u^{-k}\mathbf V=\mathbf V\mathbf J^k$
and the bounds on the singular values
\begin{equation*}
    \sigma_{\min}^2(\mathbf V)\mathbf I_{n_u}
    \preceq
    \mathbf V^T\mathbf V
    \preceq
    \sigma_{\max}^2(\mathbf V)\mathbf I_{n_u}
\end{equation*}
give
\begin{align*}
&\mathbf V^T(\mathbf A_u^{-t})^T
(\mathbf A_u^{-t})\mathbf V
\preceq
\sigma_{\max}^2(\mathbf V)
(\mathbf J^t)^T\mathbf J^t,\\
&-\mathbf V^T(\mathbf A_u^{-jL})^T
(\mathbf A_u^{-jL})\mathbf V
\preceq
-\sigma_{\min}^2(\mathbf V)
(\mathbf J^{jL})^T\mathbf J^{jL}.
\end{align*}
Combining these inequalities with the remainder bound, the transformed
Hessian
$\widetilde{\mathbf H}_t:=\mathbf V^T\mathbf H_t\mathbf V$
satisfies
\begin{align} \label{eq:transformed_H_bound}
    \widetilde{\mathbf H}_t \preceq
    &\beta \sigma_{\max}^2(\mathbf V) (\mathbf J^t)^T \mathbf J^t
    + C_{\text{rem}} \sum_{k \in \mathcal{T}_{\mathrm{rem}}} (\mathbf J^{t-k})^T (\mathbf J^{t-k}) \notag \\
    &- \alpha \sigma_{\min}^2(\mathbf V) \sum_{j=0}^{N_t-1} (\mathbf J^{jL})^T (\mathbf J^{jL}).
\end{align}

    We now evaluate the limit of \eqref{eq:transformed_H_bound} on the invariant subspaces:

    Case A: Strictly Unstable Subspace ($\mathcal{S}_{<1}$) \\
    Since the spectral radius $\rho(\mathbf J_{<1}) < 1$, terms involving powers of $\mathbf J_{<1}$ vanish as $t \to \infty$. The most recent block ($j=0$) therefore gives the dominant constant negative term
    \begin{equation*}
        \widetilde{\mathbf H}_{<1} \preceq -\alpha \sigma_{\min}^2(\mathbf V) (\mathbf J_{<1}^0)^T (\mathbf J_{<1}^0) = -\alpha \sigma_{\min}^2(\mathbf V) \mathbf I_{n_{<1}}.
    \end{equation*}
    Thus, $\widetilde{\mathbf H}_{<1}$ is asymptotically strictly negative definite.

    Case B: Marginally Unstable Subspace ($\mathcal{S}_{=1}$) \\
    Here, $\|\mathbf J_{=1}^t\| \sim O(t^{K-1})$. For any unit vector $\mathbf v$, the prior and remainder terms in \eqref{eq:transformed_H_bound} grow no faster than the largest individual term, which scales as
    \begin{equation*}
        \text{Positive} \sim \|\mathbf J_{=1}^t \mathbf v\|^2 \sim O(t^{2K-2}).
    \end{equation*}
    In contrast, the cumulative negative term scales as
    \begin{align*}
        \text{Negative}
        &\sim
        -\sum_{j=0}^{N_t-1}
        \|\mathbf J_{=1}^{jL}\mathbf v\|^2\\
        &\sim
        -\int_0^t \tau^{2K-2}\,d\tau
        \sim -O(t^{2K-1}).
    \end{align*}
    Since the order of the cumulative negative term ($2K-1$) strictly exceeds that of the positive terms ($2K-2$), $\lambda_{\max}(\widetilde{\mathbf H}_{=1}) \to -\infty$.

Consequently, $\mathbf H_t$ becomes strictly negative definite for sufficiently large $t$.
\end{IEEEproof}

\section{Proof of Proposition~\ref{prop:epi_directed_info_lower_bound}}
\label{app:proof_epi_directed_info_lower_bound}

\begin{IEEEproof}
Fix $t$ and an observation history
$\mathbf y^{t-1}=\mathbf a^{t-1}$. Since
$\mathbf r_t=g_t(\mathbf z_t)$ and
$\mathbf y_t=\mathbf r_t+\mathbf v_t$, the independence of
$\mathbf v_t$ gives
\begin{align}
I(\mathbf z_t;\mathbf y_t|
\mathbf y^{t-1}=\mathbf a^{t-1})
&=
h(\mathbf r_t+\mathbf v_t|
\mathbf y^{t-1}=\mathbf a^{t-1})
-h(\mathbf v_t).
\label{eq:app_epi_mutual_information}
\end{align}

For the fixed history, $\mathbf r_t$ and $\mathbf v_t$ are independent.
The entropy power inequality \cite{Bobkov2017EPI} therefore yields
\begin{align}
&\mathrm{EP}(\mathbf r_t+\mathbf v_t|
\mathbf y^{t-1}=\mathbf a^{t-1})
\notag\\
&\quad\ge
\mathrm{EP}(\mathbf r_t|
\mathbf y^{t-1}=\mathbf a^{t-1})
+\mathrm{EP}(\mathbf v_t).
\label{eq:app_conditional_epi}
\end{align}
Using
\begin{equation*}
    h(\mathbf a)
    =
    \frac p2\log_2\!\left(2\pi e\,\mathrm{EP}(\mathbf a)\right)
\end{equation*}
in \eqref{eq:app_epi_mutual_information} and then applying
\eqref{eq:app_conditional_epi} give
\begin{align*}
I(\mathbf z_t;\mathbf y_t|
\mathbf y^{t-1}=\mathbf a^{t-1})
\ge
\frac p2\log_2\!\left(
1+
\frac{
\mathrm{EP}(\mathbf r_t|
\mathbf y^{t-1}=\mathbf a^{t-1})
}{
\mathrm{EP}(\mathbf v_t)
}
\right).
\end{align*}

Taking expectation over the random observation history gives a valid
lower bound averaged over observation histories. Define
\begin{equation*}
f_t(a)
:=
\frac p2\log_2\!\left(
1+2^{\frac{2}{p}(a-h(\mathbf v_t))}
\right).
\end{equation*}
The function $f_t$ is convex because it is a positive scaling of a
log-sum-exp function. Jensen's inequality and the definition of conditional
differential entropy therefore imply
\begin{align}
&I(\mathbf z_t;\mathbf y_t|\mathbf y^{t-1})
\notag\\
&\quad\ge
\mathbb E_{\mathbf y^{t-1}}\!\left[
f_t\!\left(
h(\mathbf r_t|
\mathbf y^{t-1}=\mathbf a^{t-1})
\right)
\right]
\notag\\
&\quad\ge
f_t\!\left(
h(\mathbf r_t|\mathbf y^{t-1})
\right)
\notag\\
&\quad=
\frac p2\log_2\!\left(
1+
\frac{
\mathrm{EP}(\mathbf r_t|\mathbf y^{t-1})
}{
\mathrm{EP}(\mathbf v_t)
}
\right).
\label{eq:app_average_epi_bound}
\end{align}

Summing \eqref{eq:app_average_epi_bound} from $t=0$ to $t=T$ and using
the memoryless representation of directed information gives
\begin{align*}
I(\mathbf z^{T}\to\mathbf y^{T})
&=
\sum_{t=0}^{T}
I(\mathbf z_t;\mathbf y_t|\mathbf y^{t-1})\\
&\ge
\sum_{t=0}^{T}
\frac p2\log_2\!\left(
1+
\frac{
\mathrm{EP}(\mathbf r_t|\mathbf y^{t-1})
}{
\mathrm{EP}(\mathbf v_t)
}
\right).
\end{align*}
Dividing by $T+1$ and taking the limit inferior proves
\eqref{eq:epi_directed_info_lower_bound}.

Finally, if
$\mathbf v_t\sim\mathcal N(\mathbf 0,\mathbf R_t)$, then
\begin{align*}
h(\mathbf v_t)
&=
\frac12\log_2\!\left(
(2\pi e)^p\det\mathbf R_t
\right),\\
\mathrm{EP}(\mathbf v_t)
&=
\frac{1}{2\pi e}
2^{\frac{2}{p}h(\mathbf v_t)}
=
(\det\mathbf R_t)^{1/p}.
\end{align*}

\end{IEEEproof}

\fi

\bibliographystyle{IEEEtran}
\bibliography{ref}

\end{document}